\newtheorem{lemma}{Lemma}
\newtheorem{proposition}{Proposition}
\newtheorem{definition}{Definition}
\newtheorem{theorem}{Theorem}
\newtheorem{corollary}{Corollary}
\newcommand{\acro}[1]{\ensuremath{\mathcal{#1}}}
\newcommand{\agents}{N}
\newcommand{\img}{\mathbf{image}}
\newcommand{\themap}{\mathbf{f}}
\newcommand{\tuple}[1]{\langle #1\rangle}
\newcommand{\coal}[1]{[#1]}
\newenvironment{proof}[1][Proof]{\begin{trivlist}
\item[\hskip \labelsep {\bfseries #1}]}{\unskip\nobreak\hfil\penalty50
   \hskip2em\hbox{}\nobreak\hfil
   \ding{111}%
   \parfillskip=0pt \finalhyphendemerits=0
    \medskip\goodbreak\noindent\ignorespacesafterend\end{trivlist}}
\begin{document}
\title{Completeness of Epistemic Coalition Logic with Group Knowledge} 
\author{Thomas {\AA}gotnes and Natasha Alechina}
\date{\today}

\maketitle

\begin{abstract}
  Coalition logic is one of the most popular logics for multi-agent
  systems. While epistemic extensions of coalition logic have received
  much attention, existence of their complete axiomatisations
  has so far been an open problem. In this paper we settle several of
  those problems. We prove completeness for epistemic coalition logic
  with common knowledge, with distributed knowledge, and with both
  common and distributed knowledge, respectively.
\end{abstract}

\section{Introduction}

Logics of coalitional ability such as \emph{Coalition Logic}
(\acro{CL}) \cite{pauly:2002a}, \emph{Alternating-time Temporal Logic}
(\acro{ATL}) \cite{Alur2002}, and STiT logics \cite{Belnap88stit}, are
arguably some of the most studied logics in multi-agent systems in
recent years. Many different variants of these logics have been
proposed and studied, but so far meta-logical results have focused
more on computational expressiveness and expressive power and less on
completeness, with Goranko's and van Drimmelen's completeness proof
for \acro{ATL} \cite{goranko:2006a}, Pauly's completeness proof for
\acro{CL} \cite{pauly:2002a} and Broersen and colleagues' completeness
proofs for different variants of STiT logic
\cite{Broersen//:07a,BroersenDALT2008,BroersenHerzigTroquardJANCL2009}
being notable exceptions.

The main construction in coalitional ability logics is of the form
$[G]\phi$, where $G$ is a set of agents and $\phi$ a formula,
intuitively meaning that $G$ is \emph{effective} for $\phi$, or that
$G$ can make $\phi$ come true no matter what the other agents do.  One
of the most studied extension of basic coalitional ability logics is
adding \emph{knowledge} operators of the type found in \emph{epistemic
  logic} \cite{Fagin1995,Meyer1995}: both \emph{individual} knowledge
operators $K_i$ where $i$ is an agent, and different types of
\emph{group} knowledge operators $E_G$, $C_G$ and $D_G$ where $G$ is a
group of agents, standing for everybody-knows, common knowledge and
distributed knowledge, respectively. Combining coalitional ability
operators and epistemic operators in general and group knowledge
operators in particular lets us express many potentially interesting
properties of multi-agent systems, such as \cite{Hoek2003a}:
\begin{itemize}
\item $K_i \phi \rightarrow [\{i\}]K_j \phi$: $i$ can communicate her
  knowledge of $\phi$ to $j$
\item $C_G \phi \rightarrow [G]\psi$: common knowledge in $G$ of
  $\phi$ is sufficient for $G$ to ensure that $\psi$
\item $[G]\psi \rightarrow D_G \phi$: distributed knowledge in $G$ of
  $\phi$ is necessary for $G$ to ensure that $\psi$
\item $D_G\phi \rightarrow [G]E_G\phi$: $G$ can cooperate to make
  distributed knowledge explicit
\end{itemize}

In this paper we study a complete axiomatisation of variants of
\emph{epistemic coalition logic} (\acro{ECL}), extensions of coalition
logic with individual knowledge and different combinations of common
knowledge and distributed knowledge. Coalition logic, the next-time
fragment of \acro{ATL}, is one of the most studied coalitional ability
logics, and this paper settles a key open problem: completeness of its
epistemic variants.

While epistemic coalitional ability logics have been studied to a
great extent, we are not aware of any published completeness results
for such logics with all epistemic operators. \cite{Hoek2003a} gives
some axioms of \acro{ATEL}, \acro{ATL} extended with epistemic
operators, but does not attempt to prove completeness\footnote{In an
  abstract of a talk given at the LOFT workshop in 2004
  \cite{goranko:2004}, the authors propose a full axiomatisation of
  \acro{ATEL} with individual knowledge and common knowledge operators
  with similar axioms. However, neither a completeness proof nor the result
  itself was published, and a proof indeed does not exist as explained
  to us in personal communication (Valentin Goranko).}.  Broersen and
colleagues \cite{BroersenDALT2008,BroersenHerzigTroquardJANCL2009}
prove completeness of variants of STiT logic that include individual
knowledge operators, but not group knowledge operators, and
\cite{BroersenHerzigTroquardJANCL2009} concludes that adding group
operators is an important challenge.

The rest of the paper is organised as follows. In the next section we
first give a brief review of coalition logic, and how it is extended
with epistemic operators. We then, in each of the three following
sections, consider basic epistemic coalition logic with individual
knowledge operators extended with common knowledge, with distributed
knowledge, and with both common and distributed knowledge,
respectively. For each of these cases we show a completeness
result. For the common knowledge case we also show a filtration
result. We conclude in Section \ref{sec:conclusions}.

\section{Background}

We define several extensions of propositional logic, and the usual
derived connectives, such as $\phi \rightarrow \psi$ for $\neg \phi
\wedge \psi$, will be used.

\subsection{Coalition Logic}

Assume a set $\Theta$ of atomic propositions, and a finite set
$\agents$ of agents. A \emph{coalition} is a set $G \subseteq \agents$
of agents. We sometimes abuse notation and write a singleton coalition
$\{i\}$ as $i$.

The language of coalition logic (\acro{CL}) is defined by the
following grammar:
\[\phi ::= p \mid \neg \phi \mid \phi \wedge \phi \mid [G]\phi\]
where $p \in \Theta$ and $G \subseteq \agents$.

A \emph{coalition model} is a tuple
\[M = \tuple{S,E,V}\]
where 
\begin{itemize}
\item $S$ is a non-empty set of \emph{states};
\item $V$ is a \emph{valuation function}, assigning a set $V(s) \subseteq \Theta$ to each state $s\in S$;
\item $E$ assigns a \emph{truly playable effectivity function} $E(s)$ to each state
$s \in S$.
\end{itemize}
An \emph{effectivity function} \cite{pauly:2002a} over $\agents$ and a
set of states $S$ is a function $E$ that maps any coalition $G
\subseteq \agents$ to a set of sets of states $E(G) \subseteq 2^S$. An
effectivity function is \emph{truly playable}
\cite{pauly:2002a,goranko:2011} iff it satisfies the following
conditions:
\begin{description}
\item[E1] $\forall s \in S \forall G \subseteq N \ \emptyset \not \in E(G)(s)$ (Liveness)
\item[E2] $\forall s \in S \forall G \subseteq N \  S \in E(G)(s)$ (Safety)
\item[E3] $\forall s \in S \forall X \subseteq S \ \bar{X} \not \in E(\emptyset)(s) \Rightarrow X \in E(N)(s)$ ($N$-maximality)
\item[E4] $\forall s \in S \forall G \subseteq N \forall X \subseteq Y \subseteq S \ 
 X \in E(G)(s) \Rightarrow Y \in E(G)(s)$ (outcome monotonicity)
\item[E5]  $\forall s \in S \forall G_1,G_2 \subseteq N \forall X,Y \subseteq S\ X \in E(G_1)(s)$ and $Y \in E(G_2)(s)$
$\Rightarrow X \cap Y \in E(G_1 \cup G_2)(s)$, where $G_1 \cap G_2 = \emptyset$ (superadditivity)
\item[E6] $E^{nc}(\emptyset) \not = \emptyset$, where $E^{nc}(\emptyset)$ is
the \emph{non-monotonic core} of the empty coalition, namely
$$\{X \in E(\emptyset): \neg \exists Y (Y \in E(\emptyset)\ {\rm and}\ Y \subset X)\}$$
\end{description}
An effectivity function that only satisfies E1-E5 is called
\emph{playable}. On finite domains an effectivity function is playable
iff it is truly playable \cite{goranko:2011}, because on finite domains E6
follows from E1-E5.

An \acro{CL} formula is interpreted in a state in a coalition model as follows:
\begin{description}
\item[] $M,s \models p$ iff $p \in V(s)$
\item[] $M,s \models \neg \phi$ iff $M,s \not\models \phi$
\item[] $M,s \models (\phi_1 \wedge \phi_2)$ iff $(M,s \models \phi_1 \mbox{ and } M,s \models \phi_2)$
\item[] $M,s \models [G]\phi$ iff $\phi^{M} \in E(s)(G)$
\end{description}
where $\phi^{M} = \{t \in S: M,t \models \phi\}$.

Figure \ref{fig:cl-ax} shows an axiomatisation $CL$ of coalition logic
which is sound and complete wrt. all coalition models
\cite{pauly:2002a}. The following \emph{monotonicity rule} is
derivable, and will be useful later: $\vdash_{CL} \phi \rightarrow
\psi \Rightarrow \vdash_{CL} \coal{G}\phi \rightarrow \coal{G}\psi$.
\begin{figure}[h]
  \centering
\begin{description}
\item[Prop] Classical propositional logic
\item[G1] $\neg [G] \bot$
\item[G2] $[G] \top$
\item[G3] $\neg [\emptyset] \neg \phi \rightarrow [N] \phi$
\item[G4] $[G] (\phi \wedge \psi) \rightarrow [G] \psi$
\item[G5] $[G_1]\phi \wedge [G_2]\psi \rightarrow [G_1 \cup G_2] (\phi \wedge \psi)$, if $G_1 \cap G_2 = \emptyset$
\item[MP] $\vdash_{CL} \phi, \phi \rightarrow \psi \Rightarrow \vdash_{CL}  \psi$
\item[RG] $\vdash_{CL} \phi \leftrightarrow \psi \Rightarrow \vdash_{CL} [G] \phi \leftrightarrow [G] \psi $
\end{description}
  \caption{$CL$: axiomatisation of \acro{CL}.}
  \label{fig:cl-ax}
\end{figure}

\subsection{Adding Knowledge Operators}

Epistemic extensions of coalition logic were first proposed in
\cite{Hoek2003a}\footnote{In that paper for \acro{ATL}; \acro{CL} is a
  fragment of \acro{ATL}.}. They are obtained by extending the
language with \emph{epistemic operators}, and the models with
\emph{epistemic accessibility relations}.

An epistemic accessibility relation for agent $i$ over a set of states
$S$ is a binary relation $\sim_i \subseteq S \times S$. We will assume
that epistemic accessibility relations are equivalence relations. An
\emph{epistemic coalition model}, henceforth often called simply a
\emph{model}, is a tuple
\[M = \tuple{S,E,\sim_1,\ldots,\sim_n,V}\] where $\tuple{S,E,V}$ is a
coalition model and $\sim_i$ is an epistemic accessibility relation
over $S$ for each agent $i$.

Epistemic operators come in two types: individual knowledge operators
$K_i$, where $i$ is an agent, and group knowledge operators $C_G$ and
$D_G$ where $G$ is a coalition for expressing \emph{common knowledge}
and \emph{distributed knowledge}, respectively. Formally, the language
of \acro{CLCD} (\emph{coalition logic with common and distributed
  knowledge}), is defined by extending coalition logic with all of
these operators:
\[\phi ::= p \mid \neg \phi \mid \phi \wedge \phi \mid [H]\phi\mid K_i \phi \mid C_G \phi \mid D_G \phi\]
where $p \in \Theta$, $i \in \agents$, $H \subseteq \agents$ and
$\emptyset \neq G \subseteq \agents$. When $G$ is a coalition, we
write $E_G \phi$ as a shorthand for $\bigwedge_{i \in G} K_i \phi$
(everyone in $G$ knows $\phi$).

The languages of the logics \acro{CLK}, \acro{CLC} and \acro{CLD} are
the restrictions of this language with no $C_G$ and no $D_G$
operators, no $D_G$ operators, and no $C_G$ operators, respectively.

The interpretation of these languages in an (epistemic coalition)
model $M$ is defined by adding the following clauses to the definition
for \acro{CL}:
\begin{description}
\item[] $M,s \models K_i \phi$ iff $\forall t \in S, (s,t) \in \sim_i \Rightarrow M,t \models \phi$
\item[] $M,s \models C_G \phi$ iff $\forall t \in S, (s,t) \in (\bigcup_{i \in
    G} \sim_i)^* \Rightarrow M,t \models \phi$
\item[] $M,s \models D_G \phi$ iff $\forall t \in S, (s,t) \in \bigcap_{i \in G} \sim_i \Rightarrow M,t \models \phi$
\end{description}
where $R^*$ denotes the transitive closure of the relation $R$.  We
use $\models \phi$ to denote the fact that $\phi$ is \emph{valid},
i.e., that $M,s \models \phi$ for all $M$ and states $s$ in $M$.

\subsubsection{Some Auxiliary Definitions}

The following are some auxiliary concepts that will be useful in the
following.

A \emph{pseudomodel} is a tuple $M=(S, \{\sim_i: i \in N\}, \{R_G :
\emptyset \neq G \subseteq \agents\}, E,V)$ where $(S, \{\sim_i: i \in
N\}, E,V)$ is a model and:
\begin{itemize}
\item $R_G \subseteq S \times S$ is an equivalence relation for each $G$
\item For any $i \in \agents$, $R_i = \sim_i$
\item For any $G$, $H$, $G \subseteq H$ implies that $R_H \subseteq R_G$
\end{itemize}
The interpretation of a \acro{CLCD} formula in a state of a
pseudomodel is defined as for a model, except for the case for $D_G$
which is interpreted by the $R_G$ relation:
\begin{description}
\item[] $M,s \models D_G \phi$ iff $\forall t \in S, (s,t) \in R_G \Rightarrow M,t \models \phi$
\end{description}

An \emph{epistemic model} is a model without the $E$ function, i.e., a
tuple $\tuple{S,\sim_1,\ldots,\sim_n,V}$. An \emph{epistemic
  pseudomodel} is a pseudomodel without the $E$ function, i.e., a
tuple $\tuple{S, \{\sim_i: i \in N\}, \{R_G : \emptyset \neq G
  \subseteq \agents\}, V}$.

Finally, a \emph{playable (pseudo)model} is a (pseudo)model where only 
conditions E1-E5 on $E$ hold.

\section{Coalition Logic with Common Knowledge}
\label{sec:clc}

In this section we consider the logic \acro{CLC}, extending coalition
logic with individual knowledge operators and common knowledge. We
first prove a completeness result, and then show that \acro{CLC}
admits filtrations.

\subsection{Completeness}

The axiomatisation $CLC$ is shown in Figure \ref{fig:clc-ax}. It extends
$CL$ with standard axioms and rules for individual and common knowledge
(see, e.g., \cite{Fagin1995}).

\begin{figure}[h]
  \centering
\begin{description}
\item[Prop] Classical propositional logic
\item[G1] $\neg [G] \bot$
\item[G2] $[G] \top$
\item[G3] $\neg [\emptyset] \neg \phi \rightarrow [N] \phi$
\item[G4] $[G] (\phi \wedge \psi) \rightarrow [G] \psi$
\item[G5] $[G_1]\phi \wedge [G_2]\psi \rightarrow [G_1 \cup G_2] (\phi \wedge \psi)$, if $G_1 \cap G_2 = \emptyset$
\item[MP] $\vdash_{CLC} \phi, \phi \rightarrow \psi \Rightarrow \vdash_{CLC}  \psi$
\item[RG] $\vdash_{CLC} \phi \leftrightarrow \psi \Rightarrow \vdash_{CLC} [G] \phi \leftrightarrow [G] \psi $
\item[K] $K_i (\phi \rightarrow \psi) \rightarrow (K_i \phi \rightarrow K_i \psi)$
\item[T] $K_i \phi \rightarrow \phi$
\item[4] $K_i \phi \rightarrow K_i K_i \phi$
\item[5] $\neg K_i \phi \rightarrow K_i \neg K_i \phi$
\item[C1] $E_G \phi \leftrightarrow  \bigwedge_{i \in G} K_i \phi$
\item[C2] $C_G  \phi \rightarrow E_G(\phi \wedge C_G \phi)$
\item[RN] $\vdash_{CLC} \phi \Rightarrow \vdash_{CLC} K_i \phi$
\item[RC] $\vdash_{CLC} \phi \rightarrow E_G(\phi \wedge \psi) \Rightarrow \vdash_{CLC} \phi \rightarrow C_G \psi$
\end{description}
  \caption{$CLC$: axiomatisation of \acro{CLC}.}
  \label{fig:clc-ax}
\end{figure}

It is easy to show that $CLC$ is sound wrt. all models.
\begin{lemma}[Soundness]
  For any $CLC$-formula $\phi$, $\vdash_{CLC} \phi \Rightarrow  \models \phi$.
\end{lemma}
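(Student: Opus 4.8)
The plan is to prove soundness axiom-by-axiom and rule-by-rule, showing that each axiom of $CLC$ is valid in all epistemic coalition models and that each inference rule preserves validity; since $\vdash_{CLC}\phi$ means $\phi$ has a derivation from the axioms using the rules, an induction on the length of the derivation then yields $\models\phi$. Most of this is routine. The propositional fragment (\textbf{Prop}, \textbf{MP}) is sound by the standard truth-functional argument. The epistemic axioms \textbf{K}, \textbf{T}, \textbf{4}, \textbf{5} and the rule \textbf{RN} are sound because each $\sim_i$ is an equivalence relation: these are exactly the familiar S5 validities, so one just unfolds the semantic clause for $K_i$ and uses reflexivity, transitivity and symmetry of $\sim_i$ respectively. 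Axiom \textbf{C1} is immediate from the definition of $E_G$ as the abbreviation $\bigwedge_{i\in G}K_i$.

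For the coalition axioms I would invoke the properties of truly playable effectivity functions from the \textbf{E1}--\textbf{E6} list. \textbf{G1} ($\neg[G]\bot$) holds because $\bot^M=\emptyset\notin E(s)(G)$ by Liveness (\textbf{E1}); \textbf{G2} ($[G]\top$) holds because $\top^M=S\in E(s)(G)$ by Safety (\textbf{E2}); \textbf{G3} follows from $N$-maximality (\textbf{E3}), reading $\neg[\emptyset]\neg\phi$ as $\overline{\phi^M}\notin E(\emptyset)(s)$ and concluding $\phi^M\in E(N)(s)$; \textbf{G4} follows from outcome monotonicity (\textbf{E4}) since $(\phi\wedge\psi)^M\subseteq\psi^M$; and \textbf{G5} follows from superadditivity (\textbf{E5}) since $(\phi\wedge\psi)^M=\phi^M\cap\psi^M$ and the side condition $G_1\cap G_2=\emptyset$ is exactly the hypothesis of \textbf{E5}. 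The rule \textbf{RG} is sound because if $\phi\leftrightarrow\psi$ is valid then $\phi^M=\psi^M$ in every model, so $\phi^M\in E(s)(G)$ iff $\psi^M\in E(s)(G)$.

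The only parts needing slightly more care are the common-knowledge axiom \textbf{C2} and the induction rule \textbf{RC}, so I expect those to be the main (though still standard) obstacle. For \textbf{C2}, suppose $M,s\models C_G\phi$, i.e.\ $M,t\models\phi$ for every $t$ reachable from $s$ by $(\bigcup_{i\in G}\sim_i)^*$. Fix $i\in G$ and $t$ with $(s,t)\in\sim_i$; then any $u$ reachable from $t$ by the transitive closure is also reachable from $s$, so $M,u\models\phi$, giving $M,t\models\phi\wedge C_G\phi$; as $t$ and $i$ were arbitrary, $M,s\models E_G(\phi\wedge C_G\phi)$. For \textbf{RC}, assume $\models\phi\rightarrow E_G(\phi\wedge\psi)$ and suppose $M,s\models\phi$; one shows by induction on the length of a $(\bigcup_{i\in G}\sim_i)$-path from $s$ that every reachable state satisfies $\phi\wedge\psi$ (each step uses the hypothesis to propagate both $\phi$ and $\psi$ one $\sim_i$-edge further), hence $M,s\models C_G\psi$. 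Assembling these cases completes the induction on derivation length and establishes the lemma. I would close by noting that the $R_G$ relations and the $D_G$ clause of pseudomodels play no role here, since $CLC$ contains no $D_G$ operator.
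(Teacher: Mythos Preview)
Your proof is correct and is exactly the standard axiom-by-axiom, rule-by-rule verification one expects; the paper itself does not spell out a proof for this lemma, merely remarking that it is easy to show that $CLC$ is sound with respect to all models, so your argument is the natural elaboration of what the paper leaves implicit.
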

In the remainder of this section we show that $CLC$ also is complete.

\begin{theorem}
\label{th:clc}
Any $CLC$-consistent formula is satisfied in some model.
\end{theorem}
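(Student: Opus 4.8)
The plan is to build a canonical-model-style construction for $CLC$, following the standard recipe for logics with common knowledge (as in \cite{Fagin1995}) but combined with Pauly's coalitional construction from \cite{pauly:2002a}. First I would fix a $CLC$-consistent formula $\phi_0$ and work inside a finite closure set $\Sigma$ of subformulas: the set of all subformulas of $\phi_0$, closed under single negations, and crucially closed so that $C_G\psi \in \Sigma$ implies $K_i(\psi \wedge C_G\psi) \in \Sigma$ and $K_i\psi \in \Sigma$ for each $i \in G$. This keeps $\Sigma$ finite while making the common-knowledge fixpoint axiom C2 and rule RC usable. The states $S$ of the model are the maximal $CLC$-consistent subsets of $\Sigma$ (equivalently, atoms of $\Sigma$); the valuation $V$ reads off the atoms in each state in the usual way. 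This finite, filtration-style base is what lets the eventual truth lemma go through for the transitive-closure semantics of $C_G$.

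Next I would define the epistemic relations and the effectivity function separately. For each agent $i$, set $s \sim_i t$ iff $s$ and $t$ agree on all $K_i$-formulas in $\Sigma$; since K, T, 4, 5 are in the system, a routine argument shows each $\sim_i$ is an equivalence relation, and the induced relation handles $K_i$ in the truth lemma. For the effectivity function $E(s)$ I would use Pauly's technique: define, for each state $s$ and each coalition $G$, a ``canonical'' collection of sets of states generated by the formulas $[G]\psi \in \Sigma$ that are in $s$ — roughly, $X \in E(s)(G)$ iff there is $[G]\psi \in s$ with $\{t : \psi \in t\} \subseteq X$ — and then take the monotone, $N$-maximal, superadditive closure so that E1–E5 hold. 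Axioms G1–G5 are exactly what is needed for this closure to be consistent, i.e. nonempty and liveness-respecting; on the finite domain E6 comes for free by \cite{goranko:2011}. The truth lemma for the $[G]$ case then reduces to showing $\psi^M = \{t : \psi \in t\}$ is in $E(s)(G)$ iff $[G]\psi \in s$, using G3/G4/G5 for the two directions exactly as in Pauly's completeness proof.

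The main obstacle, and the step I would spend the most care on, is the truth lemma for $C_G$ under the reflexive-transitive-closure semantics — the classical difficulty for common knowledge. For the easy direction ($C_G\psi \in s$ implies $\psi$ holds everywhere $(\bigcup_{i\in G}\sim_i)^*$-reachable), I would iterate axiom C2 along a reachability path, using the closure properties of $\Sigma$ so that the relevant conjunctions stay in $\Sigma$. For the hard direction I would use the standard fixpoint/induction argument: let $X$ be the set of states from which $\neg C_G\psi$ is not derivable, i.e. states whose membership forces $\psi$ at every $G$-reachable state; using rule RC with the characteristic formula of $X$ (a disjunction of state-descriptions, which is expressible because $S$ is finite), show that this formula implies $C_G\psi$, hence if $C_G\psi \notin s$ there must be a $\sim_i$-step ($i \in G$) to a state still lacking $\psi$ or $C_G\psi$, contradicting $s$'s maximality unless $\psi$ genuinely fails somewhere reachable. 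A subtlety to watch is that the effectivity function and the epistemic relations are defined independently, so no interaction axioms are needed and the two halves of the truth lemma (coalitional and epistemic) do not interfere; once both are in place, $\phi_0 \in s$ for some state $s$ (by Lindenbaum relative to $\Sigma$) gives $M, s \models \phi_0$, completing the proof.
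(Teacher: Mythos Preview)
Your proposal is correct and takes essentially the same approach as the paper: a finite model whose states are atoms of a suitably closed set of subformulas, epistemic relations given by agreement on $K_i$-formulas, Pauly's canonical effectivity construction (with $E(N)$ handled separately via duality with $E(\emptyset)$ to secure $N$-maximality), and the standard characteristic-formula fixpoint argument via rule RC for the $C_G$ case. The paper presents this as first building the full canonical model over all maximal consistent sets and then filtrating through the closure set, whereas you go directly to the finite atoms; the resulting model and the truth-lemma verifications coincide.
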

\begin{proof}
  We define a canonical playable model $M^c=(S^c, \{\sim^c_i: i \in N\},
  E^c,V^c)$ as follows:
\begin{description}
\item $S^c$ is the set of all maximally $CLC$ consistent sets of formulas
\item $s \sim^c_i t$ iff $\{\psi: K_i \psi \in s\}=\{\psi: K_i \psi \in t\}$
  \item $X \in E^c(G)(s)$ iff $\left\{\begin{array}{ll}
        \exists \phi \{s \in S^c : \phi \in s\} \subseteq X : [G]\phi \in s& G \neq N\\
        \forall \phi \{s \in S^c : \phi \in s\} \subseteq S^c\setminus X : [\emptyset]\phi \not\in s& G = N
        \end{array}\right.$
\item $V^c$: $s\in V^c(p)$ iff $p \in s$
\end{description}
The conditions on $\sim^c_i$ (that it is an equivalence relation) and on $E^c$ (that it satisfies E1-E5)
hold in $M^c$. The proof for $\sim^c_i$ is obvious and the proof for $E^c$ is standard.
The intuition of cause is that a formula belongs to a state
$s$ in a model iff it is true there (truth lemma).  However, the
canonical model is in general \emph{not} guaranteed to satisfy
every consistent formula in the \acro{CLC} language; the case of $C_G$ in the
truth lemma does not necessarily hold.  Therefore we are going to
transform $M^c$ by filtration into a finite model for a given $CLC$ consistent formula $\phi$.
Note that since $\phi$ is consistent, it will belong to at least one
$s$ in $M^c$.

Let $cl(\phi)$ be the set of subformulas of $\phi$ closed under single
negations and the condition that $C_G \psi \in cl(\phi) \Rightarrow K_i C_G \psi \in
cl(\phi)$ for all $i \in G$.  We are going to filtrate $M^c$ through
$cl(\phi)$. The resulting model $M^f=(S^f, \{\sim^f_i: i \in N\}, E^f,V^f)$ is constructed as follows:
\begin{description}
\item $S^f$ is $\{[s]_{cl(\phi)}: s \in S^c\}$ where $[s]_{cl(\phi)} = 
s \cap cl(\phi)$. We will omit the subscript $cl(\phi)$ in what follows for
readability.
\item $[s] \sim_i^f [t]$ iff $\{\psi: K_i \psi \in [s] \} = \{\psi: K_i \psi \in [t] \}$
\item $V^f(p) = [V^c(p)]_{cl(\phi)}$ (where $[X]_{cl(\phi)}=\{[s] : s \in X\}$).
Again we will omit the subscript for readability.
\item $X \in E^f(G)([s])$ iff $\{s':  \phi_X \in s' \} \in E^c(G)(s)$ 
where $\phi_X = \vee_{[t] \in X} \phi_{[t]}$ and $\phi_{[t]}$ is
a conjunction of all formulas in $[t]$.
\end{description}

We now prove by induction on the size of $\theta$ that for every $\theta \in cl(\phi)$, $M^f, [s] \models \theta$ iff $\psi \in [s]$.

\begin{description}
\item[case $\theta=p$] trivial
\item[case booleans] trivial

\item[case] $\theta=K_i \psi$ 
assume $M^f,[s] \not \models K_i \psi$. The latter means there is a $[s']$
such that $[s] \sim_i^f [s']$ and $M^f,[s'] \not \models \phi$. By the inductive hypothesis
$\phi \not \in [s']$. Since $[s']$ is deductively closed wrt $cl(\phi)$ and $K_i \psi \in cl(\phi)$,
also $K_i \psi \not \in [s']$. $[s] \sim_i^f [s']$ means that $[s]$ and $[s']$ contain the same $K_i$ formulas from $cl(\phi)$, 
hence $K_i \psi \not \in [s]$.

Assume $M^f,[s] \models K_i \psi$. Then for all $[s']$
such that $[s] \sim_i^f [s']$, $M^f,[s'] \models \psi$. This means by the IH that $\psi \in [s']$ for all $[s']\sim_i^f [s]$.
Assume by contradiction that $K_i \psi \not \in [s]$. Then $\phi_{[s]}$, where $\phi_{[s]}$ is the conjunction of all formulas
in $[s]$, is consistent with $\neg K_i \psi$. If we write $\langle K_i \rangle$ for the dual of the $K_i$ modality, this is
equivalent to: $\phi_{[s]} \wedge \langle K_i \rangle \neg \psi$ is consistent. By forcing choices, 
$$\phi_{[s]} \wedge \langle K_i \rangle \bigvee_{\neg \psi \in [t]} \phi_{[t]}$$ is consistent. By the distributivity of $\langle K_i \rangle$ over
$\vee$, 
$$\bigvee_{\neg \psi \in [t]}(\phi_{[s]} \wedge \langle K_i \rangle \phi_{[t]})$$ is consistent. So for some $[t]$ with $\neg \psi \in [t]$,
$\phi_{[s]} \wedge \langle K_i \rangle \phi_{[t]}$ is consistent. We claim that $[s]\sim_i^f [t]$. If this is the case, we have a contradiction,
since we assumed that $\psi \in [s']$ for all $[s']\sim_i^f [s]$.

Proof of the claim: if $\phi_{[s]} \wedge \langle K_i \rangle \phi_{[t]}$ is consistent, then $[s]\sim_i^f [t]$. Suppose not $[s]\sim_i^f [t]$,
that is there is a formula $\chi$ such that $K_i \chi \in [s]$ and $\neg K_i \chi \in [t]$ or vice versa. Then we have $K_i \chi \wedge \phi_{[s]} 
\wedge \langle K_i \rangle (\neg K_i \chi \wedge  \phi_{[t]})$ is consistent, but since $K_i$ is an S5 modality, this is impossible.
Same for the case when $\neg K_i \chi \in [s]$ and $K_i \chi \in [t]$.

\item[case] $\theta=[G]\psi$

  $M^f,[s] \models [G]\psi$ iff $\psi^{M^f} \in E^f(G)([s])$ iff
  $\{s': (\vee_{[t] \in \psi^{M^f}} \phi_{[t]}) \in s'\} \in
  E^c(G)(s)$ iff (by the IH) $\{s': (\vee_{\psi \in [t]} \phi_{[t]})
  \in s'\} \in E^c(G)(s)$ iff(*) $\{s': \psi \in s'\} \in E^c(G)(s)$
  iff(**) $[G]\psi \in s$ iff (since $[G]\psi \in cl(\phi)$) $[G]\psi
  \in [s]$.

  Proof of (*): assume $S^f$ contains $n+k$ states,
  $[t_1],\ldots,[t_n]$ contain $\psi$ and $[s_1],\ldots,[s_k]$
  contain $\neg \psi$. Clearly, $\phi_{[t_1]} \vee \ldots \vee
  \phi_{[t_n]} \vee \phi_{[s_1]} \ldots \vee \phi_{[s_k]}$ is provably
  equivalent to $\top$. Consider $\vee_{\psi \in [t]} \phi_{[t]}$. It
  is provably equivalent to $(\psi \wedge \phi_{[t_1]}) \vee \ldots
  \vee (\psi \wedge \phi_{[t_n]})$.  Since for every $[s_i]$ such that
  $\neg \psi \in [s_i]$, $(\psi \wedge \phi_{[s_i]})$ is provably
  equivalent to $\bot$,
$$(\psi \wedge \phi_{[t_1]}) \vee \ldots \vee (\psi \wedge \phi_{[t_n]})$$
is provably equivalent to 
$$(\psi \wedge \phi_{[t_1]}) \vee \ldots \vee (\psi \wedge \phi_{[t_n]}) \vee (\psi \wedge \phi_{[s_1]}) \vee \ldots \vee (\psi \wedge \phi_{[s_k]})$$
which in turn is provably equivalent to
$$\psi \wedge (\phi_{[t_1]} \vee \ldots \vee \phi_{[s_k]})$$
which in turn is equivalent to $\psi \wedge \top$ hence to $\psi$. So in $M^c$, $\{s': (\vee_{\psi \in [t]} \phi_{[t]}) \in s'\} = \{s': \psi \in s'\}$. 

Proof of (**): since we defined $X \in E^c(\agents)(s)$ to hold iff
$S^c \setminus X \not\in E^c(\emptyset)(s)$, it suffices to show the
case that $G \neq \agents$.  The direction to the left is immediate:
if $[G]\psi \in s$ then $\{s' \in S^c : \psi \in s'\} \in E^c(G)(s)$
by definition. For the other direction assume that $\{s' \in S^c :
\psi \in s'\} \in E^c(G)(s)$, i.e., there is some $\gamma$ such that
$\{s' \in S^c : \gamma \in s'\} \subseteq \{s' \in S^c : \psi \in
s'\}$ and $[G]\gamma \in s$. It is easy to see that $\{s' \in S^c :
\gamma \in s'\} \subseteq \{s' \in S^c : \psi \in s'\}$ implies that
$\vdash \gamma \rightarrow \psi$, and by the monotonicity rule it
follows that $[G]\psi \in s$.

\item[case] $\theta=C_G \psi$ 
The proof is similar to \cite{hvdetal.del:2007}. First we show that in $M^f$, if $C_G \psi \in cl(\phi)$, then $C_G \psi \in [s]$
iff every state on every $\sup_{i \in G} \sim_i^f$ path from $[s]$ contains $\psi$.

Suppose $C_G \psi \in [s]$. The proof is by induction on the length of the path. If the path is of 0 length, then clearly by deductive closure
and by $\psi \in cl(\phi)$ we have $\psi \in [s]$. We also have $C_G \psi \in [s]$ by the assumption.
IH: if $C_G \psi \in [s]$, then every state on every $\cup_{i \in G} \sim_i^f$ path of length $n$ from $[s]$ contains $\psi$ \emph{and $C_G \psi$}.
Inductive step: let us prove this for paths of length $n+1$. Suppose we have a path $[s] \sim_{i_1}^f [s_1] \ldots \sim_{i_{n}}^f [s_n]
\sim_{i_{n+1}}^f [s_{n+1}]$. By the IH, $\psi, C_G \psi \in [s_n]$. Since $s_n$ is deductively closed and 
$K_{i_{n+1}} C_G \psi \in cl(\phi)$, we have $K_{i_{n+1}} C_G \psi \in  [s_n]$. Since $[s_n] \sim_{i_{n+1}}^f [s_{n+1}]$
and the definition of $\sim_{i_{n+1}}^f$, $C_G \psi \in  [s_{n+1}]$ and hence by reflexivity $\psi \in  [s_{n+1}]$.

For the other direction, suppose that every state on every $\cup_{i \in G} \sim^f$ path from $[s]$ contains $\psi$. Prove that
$C_G \psi \in [s]$. Let $S_{G,\psi}$ be the set of all $[t]$ such that  every state on every $\cup_{i \in G} \sim^f$ path from $[t]$ 
contains $\psi$. Note that each $[t]$ is/corresponds to a finite set of formulas so we can write its conjunction $\phi_{[t]}$.
Consider a formula 
$$\chi = \bigvee_{[t] \in S_{G,\psi}} \phi_{[t]}$$
Similarly to \cite{hvdetal.del:2007} it can be proved that $\vdash_{CLC}\phi_{[s]} \rightarrow \chi$, $\vdash_{CLC}\chi \rightarrow \psi$
and $\vdash_{CLC}\chi \rightarrow  E_G \chi$. And from that follows that $\vdash_{CLC}\phi_{[s]} \rightarrow C_G \psi$
hence $C_G \psi \in [s]$.

Now we prove that $M^f, [s] \models C_G \psi$ iff $C_G \psi \in [s]$. $C_G \psi \in [s]$ iff every
state on every $\cup_{i \in G} \sim_i^f$ path from $[s]$ contains $\psi$ iff 
for every $[t]$ reachable from $[s]$ by a $\cup_{i \in G} \sim_i^f$ path, $M^f,[t] \models \psi$ iff $M^f, [s] \models C_G \psi$. 
 
\end{description}
\end{proof}

It is obvious that in $M^f$, $\sim_i$ are equivalence relations. So
what remains to be proved is that $E^f$ satisfies E1-E6. Since $S^f$
is finite, it suffices to show E1-E5, which for finite sets of states
entail E6.

\begin{proposition}
$M^f$ satisfies E1-E5.
\end{proposition}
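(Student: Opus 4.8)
The plan is to transfer each of E1--E5 from $E^c$ (which satisfies E1--E5, as established when $M^c$ was introduced) to $E^f$ along the ``unfiltration'' map $\widehat{(\cdot)}\colon 2^{S^f}\to 2^{S^c}$ defined by $\widehat{X}=\{s'\in S^c:\phi_X\in s'\}$, so that by definition $X\in E^f(G)([s])$ iff $\widehat{X}\in E^c(G)(s)$ (I fix, as the paper does tacitly, one representative $s$ of each class $[s]$). The first step will be a small lemma: $\widehat{(\cdot)}$ is a Boolean homomorphism, i.e. $\widehat{\emptyset}=\emptyset$, $\widehat{S^f}=S^c$, $\widehat{S^f\setminus X}=S^c\setminus\widehat{X}$, $\widehat{X\cap Y}=\widehat{X}\cap\widehat{Y}$, and $X\subseteq Y$ implies $\widehat{X}\subseteq\widehat{Y}$. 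Each of these reduces to a provable (in)equivalence between characteristic formulas together with the standard facts that membership in a maximally consistent set respects negation, conjunction, disjunction and provable equivalence. Concretely: $\phi_\emptyset=\bot$; $\phi_{S^f}=\bigvee_{[t]\in S^f}\phi_{[t]}$ holds in every maximally consistent set (each $s'$ contains $\phi_{[s']}$), hence $\vdash\phi_{S^f}$ by a Lindenbaum argument, so $\phi_{S^f}\equiv\top$; and $X\subseteq Y$ makes $\phi_X$ a sub-disjunction of $\phi_Y$, so $\vdash\phi_X\rightarrow\phi_Y$.

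The one genuinely substantive ingredient of the lemma is that distinct characteristic conjunctions are mutually exclusive: $[t]\neq[u]$ implies $\vdash\neg(\phi_{[t]}\wedge\phi_{[u]})$. This is exactly where the closure of $cl(\phi)$ under single negations is used: two distinct atoms $[t],[u]$ must disagree on some $\chi\in cl(\phi)$, so $\chi$ appears as a conjunct of one of $\phi_{[t]},\phi_{[u]}$ and $\neg\chi$ as a conjunct of the other. From this, $\phi_X\wedge\phi_Y\equiv\bigvee_{[t]\in X,[u]\in Y}(\phi_{[t]}\wedge\phi_{[u]})\equiv\bigvee_{[t]\in X\cap Y}\phi_{[t]}=\phi_{X\cap Y}$, which gives $\widehat{X\cap Y}=\widehat{X}\cap\widehat{Y}$; and $\phi_X\vee\phi_{S^f\setminus X}\equiv\phi_{S^f}\equiv\top$ together with $\phi_X\wedge\phi_{S^f\setminus X}\equiv\bot$ forces $\phi_{S^f\setminus X}\equiv\neg\phi_X$, which gives $\widehat{S^f\setminus X}=S^c\setminus\widehat{X}$.

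With the homomorphism in hand, E1--E5 fall out by one-line transfers from $E^c$. Fix $[s]$, $G$, and $X,Y\subseteq S^f$. For \textbf{E1}, $\emptyset\in E^f(G)([s])$ would give $\emptyset=\widehat{\emptyset}\in E^c(G)(s)$, contradicting liveness of $E^c$. For \textbf{E2}, $S^c=\widehat{S^f}\in E^c(G)(s)$ by safety of $E^c$, so $S^f\in E^f(G)([s])$. For \textbf{E4}, if $X\subseteq Y$ and $\widehat{X}\in E^c(G)(s)$ then $\widehat{Y}\in E^c(G)(s)$ by outcome monotonicity of $E^c$ (since $\widehat{X}\subseteq\widehat{Y}$), so $Y\in E^f(G)([s])$. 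For \textbf{E5}, if $G_1\cap G_2=\emptyset$ and $\widehat{X}\in E^c(G_1)(s)$, $\widehat{Y}\in E^c(G_2)(s)$, then $\widehat{X}\cap\widehat{Y}\in E^c(G_1\cup G_2)(s)$ by superadditivity of $E^c$, and $\widehat{X}\cap\widehat{Y}=\widehat{X\cap Y}$, so $X\cap Y\in E^f(G_1\cup G_2)([s])$. For \textbf{E3}, if $S^f\setminus X\notin E^f(\emptyset)([s])$ then $S^c\setminus\widehat{X}=\widehat{S^f\setminus X}\notin E^c(\emptyset)(s)$, so $N$-maximality of $E^c$ gives $\widehat{X}\in E^c(N)(s)$, i.e. $X\in E^f(N)([s])$. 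Since $S^f$ is finite, E6 then follows from E1--E5, as already observed in the text.

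I expect no real obstacle beyond the bookkeeping in the homomorphism lemma; the only place any thought is required is the mutual exclusivity of the characteristic conjunctions. A secondary, cosmetic point is that the definition of $E^f(G)([s])$ mentions a representative $s$, so strictly one should check it is independent of that choice before calling $M^f$ a model; this is immaterial for verifying E1--E5 (fix any representative throughout), and the paper leaves it implicit.
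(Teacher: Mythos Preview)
Your proof is correct and follows essentially the same approach as the paper: both transfer each of E1--E5 from $E^c$ to $E^f$ via the correspondence $X\mapsto\{s':\phi_X\in s'\}$, using the same facts about characteristic formulas ($\phi_\emptyset\equiv\bot$, $\phi_{S^f}\equiv\top$, $\phi_{\bar X}\equiv\neg\phi_X$, $\vdash\phi_X\to\phi_Y$ for $X\subseteq Y$, and $\phi_{X\cap Y}\equiv\phi_X\wedge\phi_Y$). The only difference is organizational: you factor these facts out as a single ``Boolean homomorphism'' lemma up front and make the mutual-exclusivity step explicit, whereas the paper establishes each needed identity inline within the corresponding Ei case.
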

\begin{proof}
\begin{description}
\item[E1] Note that $\phi_{\emptyset}$ is an empty disjunction, namely $\bot$.

$\emptyset \in E^f(G)([s])$ iff (by definition of $E^f$)
$\{s': \bot \in s'\}   \in E^c(G)(s)$ iff $\emptyset \in E^c(G)(s)$. Since 
$E^c$ satisfies $E1$, $\emptyset \not \in E^f(G)([s])$.

\item[E2] $S^f \in E^f(G)([s])$ iff 
$\{s': \bigvee_{[t] \in S^f} \in s'\} \in E^c(G)(s)$ iff 
$S^c \in E^c(G)(s)$. Since $E^c$ satisfies $E2$,
$S^f \in  E^f(G)([s])$.

\item[E3] Let $\bar{X} \not \in E^f(\emptyset)([s])$. Then $\{s': \phi_{\bar{X}} \in s'\} \not \in E^c(\emptyset)(s)$. Note that
$\{s': \phi_{\bar{X}} \in s'\}$ is the complement of $\{s': \phi_{X} \in s'\}$, since $\phi_{\bar{X}} = \neg \phi_{X}$.
Since $E^c$ satisfies E3, this means that $\{s': \phi_{X} \in s'\} \in E^c(N)(s)$. Hence $X \in  E^f(N)([s])$.
 
\item[E4] Let $X \subseteq Y \subseteq S^f$ and $X \in E^f(G)([s])$. Clearly $\vdash_{CLC}  \phi_X \rightarrow \phi_Y$. 
Hence $\{s': \phi_X \in s'\} \subseteq \{s': \phi_Y \in s'\}$. Since  $X \in E^f(G)([s])$, we have $\{s': \phi_X \in s'\} \in E^c(G)(s)$.
Since $E^c$ satisfies E4,  $\{s': \phi_Y \in s'\} \in E^c(G)(s)$ so $Y \in E^f(G)([s])$.

\item[E5]  Let $X \in E^f(G_1)([s])$ and $Y \in E^f(G_2)([s])$ and $G_1 \cap G_2 = \emptyset$. 
So $\{s': \phi_X \in s'\} \in E^c(G_1)(s)$ and $\{s': \phi_Y \in s'\} \in E^c(G_2)(s)$ and since $E^c$ satisfies E5,
$\{s': \phi_X \in s'\} \cap \{s': \phi_Y \in s'\} \in  E^c(G_2)(s)$. Note that 
$$\{s': \phi_X \in s'\} \cap \{s': \phi_Y \in s'\} = \{s': (\vee_{[t] \in X} \phi_{[t]}) \in s'\ {\rm and}\ (\vee_{[t] \in Y} \phi_{[t]}) \in s'\}$$
which is in turn the same as
$$\{s': (\vee_{[t]  \in X \cap Y} \phi_{[t]} \in s'\}$$
since $\{s': (\vee_{[t]  \in X \cap Y} \phi_{[t]} \in s'\} \in  E^c(G_2)(s)$, $X \cap Y \in E^f(G_1)([s])$. 
\end{description}
\end{proof}

\begin{corollary}
  For any $CLC$-formula $\phi$,  $\vdash_{CLC} \phi$ iff $\models \phi$.
\end{corollary}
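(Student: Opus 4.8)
The plan is simply to package together the two halves of the argument that have now been assembled. The Soundness Lemma already gives the left-to-right implication $\vdash_{CLC}\phi \Rightarrow \models\phi$ outright, so the only remaining work is the converse.

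For $\models\phi \Rightarrow \vdash_{CLC}\phi$ I would reason by contraposition. If $\not\vdash_{CLC}\phi$, then $\neg\phi$ is $CLC$-consistent, and Theorem~\ref{th:clc} produces a model $M$ and a state $s$ with $M,s\models\neg\phi$, i.e. $M,s\not\models\phi$; hence $\not\models\phi$. Here one must read ``model'' in Theorem~\ref{th:clc} in the official sense of an epistemic coalition model, which requires $E^f$ to be a \emph{truly playable} effectivity function. This is exactly what the preceding Proposition supplies (E1--E5 on the finite set $S^f$), together with the observation recalled in the Background that on a finite domain E1--E5 already entail E6. Hence the filtrated structure $M^f$ built in the proof of Theorem~\ref{th:clc} is a bona fide model, its $\sim_i^f$ are equivalence relations, and the truth lemma proved there does the rest: $\neg\phi \in [s]$ for the relevant $[s]$, so $M^f,[s]\models\neg\phi$.

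Combining the two directions yields the biconditional. There is no real obstacle: every substantive ingredient — the canonical model, the filtration through $cl(\phi)$, the truth lemma, the verification that $E^f$ is truly playable, and soundness — is already in place, and the corollary is essentially a one-line consequence. The only point requiring a moment's care is the bookkeeping just mentioned, namely that Theorem~\ref{th:clc} is to be understood as incorporating the subsequent Proposition, since the theorem's proof explicitly defers the effectivity-function conditions to it.
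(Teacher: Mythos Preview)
Your proposal is correct and matches the paper's intended argument: the corollary is an immediate consequence of the Soundness Lemma in one direction and, by contraposition, of Theorem~\ref{th:clc} together with the Proposition (which ensures $M^f$ is a genuine model via E1--E5 and finiteness) in the other. The paper leaves the corollary unproved precisely because it follows in this routine way.
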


\section{Completeness of Coalition Logic with Distributed Knowledge}
\label{sec:cld}

In this section we consider the logic \acro{CLD}, extending coalition
logic with individual knowledge operators and distributed knowledge.

The axiomatisation $CLD$ is shown in Figure \ref{fig:cld-ax}. It extends
$CL$ with standard axioms and rules for individual and distributed
knowledge (see, e.g., \cite{Fagin1995}).

\begin{figure}[h]
  \centering
\begin{description}
\item[Prop] Classical propositional logic
\item[G1] $\neg [G] \bot$
\item[G2] $[G] \top$
\item[G3] $\neg [\emptyset] \neg \phi \rightarrow [N] \phi$
\item[G4] $[G] (\phi \wedge \psi) \rightarrow [G] \psi$
\item[G5] $[G_1]\phi \wedge [G_2]\psi \rightarrow [G_1 \cup G_2] (\phi \wedge \psi)$, if $G_1 \cap G_2 = \emptyset$
\item[MP] $\vdash_{CLD} \phi, \phi \rightarrow \psi \Rightarrow \vdash_{CLD}  \psi$
\item[RG] $\vdash_{CLD} \phi \leftrightarrow \psi \Rightarrow \vdash_{CLD} [G] \phi \leftrightarrow [G] \psi $
\item[K] $K_i (\phi \rightarrow \psi) \rightarrow (K_i \phi \rightarrow K_i \psi)$
\item[T] $K_i \phi \rightarrow \phi$
\item[4] $K_i \phi \rightarrow K_i K_i \phi$
\item[5] $\neg K_i \phi \rightarrow K_i \neg K_i \phi$
\item[RN] $\vdash_{CLD} \phi \Rightarrow \vdash_{CLD} K_i \phi$
\item[DK] $D_G (\phi \rightarrow \psi) \rightarrow (D_G \phi \rightarrow D_G \psi)$
\item[DT] $D_G \phi \rightarrow \phi$
\item[D4] $D_G \phi \rightarrow D_G D_G \phi$
\item[D5] $\neg D_G \phi \rightarrow D_G \neg D_G \phi$
\item[D1] $K_i \phi \leftrightarrow D_i \phi$
\item[D2] $D_G \phi \rightarrow D_H\phi$, if $G \subseteq H$
\end{description}
  \caption{$CLD$: axiomatisation of \acro{CLD}.}
  \label{fig:cld-ax}
\end{figure}

As usual, soundness can easily be shown.
\begin{lemma}[Soundness]
  For any $CLD$-formula $\phi$, $\vdash_{CLD} \phi \Rightarrow  \models \phi$.
\end{lemma}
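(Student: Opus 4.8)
The plan is the standard one for soundness of a Hilbert system: argue by induction on the length of a $CLD$-derivation that every theorem is valid, which reduces to two things --- showing that each axiom of $CLD$ is valid in every epistemic coalition model, and showing that each inference rule preserves validity. I would not belabour the base case for \textbf{Prop}, which is immediate.

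For the axioms inherited from $CL$ --- \textbf{G1}--\textbf{G5} --- I would observe that the truth clause for $[G]\psi$ is literally the same in an epistemic coalition model as in a plain coalition model (it refers only to $E$ and $V$, via $\psi^M$), so the validity arguments carry over verbatim from the soundness of $CL$: \textbf{G1} and \textbf{G2} from conditions E1 and E2 on the effectivity function, \textbf{G3} from E3, \textbf{G4} from E4, and \textbf{G5} from E5. These arguments speak only about the effectivity function and elementary set inclusions, so the added epistemic structure is irrelevant to them. For the individual-knowledge axioms \textbf{K}, \textbf{T}, \textbf{4}, \textbf{5} I would give the textbook S5 argument, using that each $\sim_i$ is an equivalence relation.

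For the distributed-knowledge axioms the one preliminary remark to make is that, in a (non-pseudo) model, $D_G$ is interpreted by $\bigcap_{i \in G}\sim_i$, and an intersection of equivalence relations is again an equivalence relation; granting this, \textbf{DK}, \textbf{DT}, \textbf{D4}, \textbf{D5} follow by exactly the S5 reasoning applied to $\bigcap_{i \in G}\sim_i$. Axiom \textbf{D1} holds because $\bigcap_{i \in \{i\}}\sim_i = \sim_i$, so $D_i\phi$ and $K_i\phi$ have the same truth clause. Axiom \textbf{D2} holds because $G \subseteq H$ implies $\bigcap_{i \in H}\sim_i \subseteq \bigcap_{i \in G}\sim_i$: if every $\bigcap_{i \in G}\sim_i$-successor of $s$ satisfies $\phi$, then a fortiori every $\bigcap_{i \in H}\sim_i$-successor does, so $M,s \models D_G\phi$ implies $M,s \models D_H\phi$.

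Finally, for the rules: \textbf{MP} preserves validity trivially; \textbf{RG} preserves it because $\phi \leftrightarrow \psi$ valid gives $\phi^M = \psi^M$ in every model, whence $[G]\phi$ and $[G]\psi$ hold at exactly the same states; and \textbf{RN} preserves it because if $\phi$ holds at every state of every model then in particular it holds at every $\sim_i$-successor of every state, so $K_i\phi$ holds everywhere. I do not expect a genuine obstacle here --- soundness is routine and the coalitional and epistemic parts of the language do not interact in any axiom --- so the only point worth stating explicitly rather than leaving mechanical is that $\bigcap_{i \in G}\sim_i$ is an equivalence relation and is antitone in $G$, which is precisely what makes the $D_G$-axioms and \textbf{D2} go through.
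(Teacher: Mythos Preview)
Your proposal is correct and follows exactly the standard approach the paper has in mind; the paper itself offers no proof beyond the remark that soundness ``can easily be shown,'' so your induction on derivations with routine verification of each axiom and rule is precisely the expected elaboration.
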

In the remainder of this section we show that $CLD$ also is complete.

For a set of formulae $s$, let $K_a s= \{K_a \phi : K_a
\phi \in s\}$ and $D_G s = \{D_G \phi : D_G \phi \in
s\}$.

\begin{definition}[Canonical Playable Pseudomodel]
  The canonical playable pseudomodel $M^c = (S^c,\{\sim_i^c : i \in \agents\},
  \{R_G^c : \emptyset \neq G \subseteq \agents\}, E^C, V^c)$ for \acro{CLD} is defined as
  follows:
  \begin{itemize}
  \item $S^c$ is the set of maximal consistent sets.
  \item $s \sim_i^c t$ iff $K_a s= K_a t$
  \item $s R_G t $ iff $D_H s = D_H t$ whenever $H \subseteq G$
  \item $V^c(p) = \{s \in S^c : p \in s\}$
  \item $X \in E^c(G)(s)$ iff $\left\{\begin{array}{ll}
        \exists \phi \{s \in S^c : \phi \in s\} \subseteq X : [G]\phi \in s& G \neq N\\
        \forall \phi \{s \in S^c : \phi \in s\} \subseteq S^c\setminus X : [\emptyset]\phi \not\in s& G = N
        \end{array}\right.$
  \end{itemize}
\end{definition}

\begin{lemma}[Pseudo Truth Lemma]
  $M^c,s \models \phi \Leftrightarrow \phi \in s$.
\end{lemma}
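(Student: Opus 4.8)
The plan is to prove the Pseudo Truth Lemma by induction on the structure of $\phi$. The Boolean cases are immediate, and the coalition-operator case $\phi = [H]\psi$ is handled exactly as in the $CLC$ proof above (the argument for (**) there, which only uses the propositional and coalition axioms G1--G5 plus the monotonicity rule, together with the definition of $E^c$ in terms of the $[H]$-operators); note this step does not interact with the epistemic accessibility relations at all, so it carries over verbatim. The epistemic cases $\phi = K_i\psi$ and $\phi = D_G\psi$ are the substantive ones, and here I would lean on the fact that $M^c$ is a canonical model for a normal multi-modal logic in which each $K_i$ is an S5 modality (axioms K, T, 4, 5, rule RN) and each $D_G$ is an S5 modality (axioms DK, DT, D4, D5, rule RN via the usual derivation).

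For the $K_i$ case: $M^c,s \models K_i\psi$ iff $\psi \in t$ for all $t$ with $s\sim_i^c t$. The left-to-right direction of ``$\Leftrightarrow \phi \in s$'' is the standard existence-lemma argument: if $K_i\psi \notin s$, then $\{K_i\chi : K_i\chi \in s\} \cup \{\neg\psi\}$ is consistent (using axiom 5 to show $\neg K_i \neg(\ldots)$ holds, or directly: if it were inconsistent, $K_i$ of a finite conjunction of the $K_i\chi$'s would entail $K_i\psi$ by K and RN, hence $K_i\psi \in s$), so it extends to a maximal consistent $t$ with $s\sim_i^c t$ and $\neg\psi \in t$; by the induction hypothesis $M^c,t \not\models\psi$. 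The right-to-left direction: if $K_i\psi \in s$ and $s\sim_i^c t$, then $K_i\psi \in t$ by definition of $\sim_i^c$, so $\psi \in t$ by axiom T, so $M^c,t\models\psi$ by IH.

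For the $D_G$ case the argument is analogous but uses the relation $R_G^c$ and the constraint $s R_G^c t$ iff $D_H s = D_H t$ for all $H\subseteq G$. Right-to-left: if $D_G\psi \in s$ and $s R_G^c t$, then since $G\subseteq G$ we get $D_G\psi \in t$, hence $\psi \in t$ by DT, hence $M^c,t\models\psi$ by IH. Left-to-right is where the main obstacle lies: I need to show that if $D_G\psi \notin s$ there is a $t$ with $s R_G^c t$ and $\psi \notin t$, i.e. $\{D_H\chi : D_H\chi \in s, H\subseteq G\} \cup \{\neg\psi\}$ is consistent. The subtlety is that this set involves $D_H$-formulas for all subcoalitions $H\subseteq G$, not just $D_G$; one uses axiom D2 ($D_H\phi \to D_G\phi$ for $H\subseteq G$) to fold each $D_H\chi$ in the set into the corresponding $D_G\chi$, reducing consistency of the mixed set to consistency of $\{D_G\chi : D_G\chi \in s\}\cup\{\neg\psi\}$, which follows from $D_G\psi\notin s$ by the normal-modal existence-lemma argument using DK, D4, D5 and RN. One must also separately check that the witnessing $t$ can be chosen to respect $R_G^c$ on the nose, i.e. that $D_H s = D_H t$ for every $H\subseteq G$; by D2 and the fact that $t$ contains all $D_G$-formulas of $s$, together with a symmetric argument using D5 that $t$ contains no $D_G$-formula outside $s$, the $D_G$-components agree, and then D2 downward forces agreement of all $D_H$-components for $H\subseteq G$ as well (each $D_H\chi \in s$ gives $D_G\chi \in s$ hence $D_G\chi \in t$; the reverse inclusion and the $H\subsetneq G$ direction need the standard S5 reasoning applied within the canonical model). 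Finally I would remark that the pseudomodel conditions on $M^c$ ($R_i^c = \sim_i^c$ from D1, each $R_G^c$ an equivalence relation, and $G\subseteq H \Rightarrow R_H^c \subseteq R_G^c$ directly from the definition) have already been or are routinely verified, so the lemma follows.
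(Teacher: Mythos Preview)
Your approach is exactly the one the paper has in mind: it simply says ``the epistemic cases are exactly as for standard normal modal logic'' and defers the coalition case to Pauly, so you are supplying the details the paper omits. The $K_i$ case and the coalition case are fine.

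There is a slip in your $D_G$ existence-lemma argument, however. You write that D2 lets you ``fold each $D_H\chi$ in the set into the corresponding $D_G\chi$'', reducing to consistency of $\{D_G\chi : D_G\chi \in s\}\cup\{\neg\psi\}$. But D2 gives $D_H\chi \to D_G\chi$ for $H\subseteq G$, so replacing $D_H\chi$ by $D_G\chi$ \emph{weakens} the set; consistency of the weaker set does not yield consistency of the original one. Similarly, your later step ``$D_H\chi \in s$ gives $D_G\chi \in s$ hence $D_G\chi \in t$'' does not by itself place $D_H\chi$ into $t$, since the implication in D2 goes the wrong way. The correct move is to observe that $D_H\chi\in s$ with $H\subseteq G$ implies $D_G D_H\chi\in s$ (by D4 for $D_H$ followed by D2), so every formula in your mixed set already lies in $\{\alpha : D_G\alpha\in s\}$; consistency of $\{\alpha : D_G\alpha\in s\}\cup\{\neg\psi\}$ then follows from the usual normal-modal argument for $D_G$, and any maximal consistent $t$ extending it automatically contains each $D_H\chi$ (and, by the dual D5+D2 argument you sketch, omits each $D_H\chi\notin s$). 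With that correction the proof goes through.
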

\begin{proof}
  The proof is by induction on $\phi$. The epistemic cases are exactly
  as for standard normal modal logic. The case for coalition operators
  is exactly as in \cite{pauly:2002a}.
\end{proof}

It is easy to check that $\sim^c_i$ are equivalence relations and E1-E5 hold
for $E^c$.

\begin{lemma}[Finite Pseudomodel]\label{lemma:cld}
Every $CLD$-consistent formula $\phi$ has a finite pseudomodel where E1-E6
hold.
\end{lemma}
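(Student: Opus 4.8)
The plan is to obtain the required finite pseudomodel by filtrating the canonical playable pseudomodel $M^c$ through a suitable finite set of formulas, in close analogy with the proof of Theorem~\ref{th:clc}. First I would fix a $CLD$-consistent formula $\phi$ and let $cl(\phi)$ be the smallest set containing $\phi$, closed under subformulas and single negations; since $D_G$ has no fixed-point behaviour here, no extra closure condition (of the kind needed for $C_G$ in the common-knowledge case) is required, so $cl(\phi)$ is finite. I would then define $S^f = \{[s] : s \in S^c\}$ with $[s] = s \cap cl(\phi)$, define $[s] \sim_i^f [t]$ iff $[s]$ and $[t]$ contain the same $K_i$-formulas of $cl(\phi)$, define $[s] R_G^f [t]$ iff $[s]$ and $[t]$ contain the same $D_H$-formulas of $cl(\phi)$ for all $H \subseteq G$, let $V^f$ be the induced valuation, and define $E^f$ exactly as in Theorem~\ref{th:clc}: $X \in E^f(G)([s])$ iff $\{s' : \phi_X \in s'\} \in E^c(G)(s)$, where $\phi_X = \bigvee_{[t] \in X}\phi_{[t]}$ and $\phi_{[t]}$ is the conjunction of the formulas in $[t]$.

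Next I would verify the truth lemma for $M^f$ by induction on the size of $\theta \in cl(\phi)$: $M^f, [s] \models \theta$ iff $\theta \in [s]$. The propositional and $[G]$ cases are handled exactly as in the proof of Theorem~\ref{th:clc} (the argument for $E^f$, including the steps marked (*) and (**) there, only uses the definition of $E^c$ and propositional reasoning over $cl(\phi)$, so it transfers verbatim). The $K_i$ case is the standard S5 filtration argument, again as in Theorem~\ref{th:clc}. The new case is $\theta = D_G\psi$: using axioms DK, DT, D4, D5 one shows that the relation $R_G^f$ behaves like the accessibility relation for an S5 operator relative to $cl(\phi)$, and a forcing-choices argument identical in form to the $K_i$ case — replacing $\langle K_i\rangle$ by the dual of $D_G$ and using that $D_G$ is an S5 modality — gives that if $D_G\psi \notin [s]$ then there is a $D_G$-equivalent $[t]$ with $\neg\psi \in [t]$, whence by the inductive hypothesis $M^f,[s] \not\models D_G\psi$; the converse direction is immediate from deductive closure within $cl(\phi)$. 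I would then separately check that $M^f$ is a genuine pseudomodel: $\sim_i^f$ and $R_G^f$ are equivalence relations (clear from the definitions), $R_i^f = \sim_i^f$ (using axiom D1, $K_i\chi \in cl(\phi)$ iff $D_i\chi$ is provably equivalent to something in the relevant class, so the two relations coincide — here a little care is needed about $cl(\phi)$ not literally containing $D_i$-formulas, which one sidesteps by noting $\sim_i^f$ and $R_i^f$ are defined via $K_i$-formulas and provable equivalence), and $G \subseteq H \Rightarrow R_H^f \subseteq R_G^f$ (immediate, since the defining condition for $R_H^f$ quantifies over a superset of the subgroups quantified over for $R_G^f$). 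Finally, that $E^f$ satisfies E1--E5 is proved exactly as in the Proposition following Theorem~\ref{th:clc} — those arguments never mention the epistemic relations — and since $S^f$ is finite, E6 follows, so E1--E6 all hold. As $\phi$ is consistent it lies in some $s \in S^c$, hence $\phi \in [s]$ and $M^f,[s]\models\phi$.

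The main obstacle I anticipate is the $D_G$ case of the truth lemma together with the verification that $R_G^f$ is the right relation: one must be careful that the closure set $cl(\phi)$ contains enough $D_H$-subformulas for the S5-style forcing-choices argument to go through for every relevant subgroup $H \subseteq G$, and that the interaction axioms D1 and D2 force $R_i^f = \sim_i^f$ and the monotonicity $R_H^f \subseteq R_G^f$ to hold at the level of the filtrated structure rather than merely in $M^c$. Everything else is routine, being either a verbatim transfer of the corresponding step from the $CLC$ proof or a standard single-agent S5 filtration argument applied to the group operator $D_G$.
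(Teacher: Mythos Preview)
Your overall strategy is exactly the paper's: filtrate the canonical playable pseudomodel through a finite closure set, carry over the $K_i$ and $[G]$ cases from the proof of Theorem~\ref{th:clc} verbatim, handle $D_G$ by the same S5 forcing-choices argument applied to the relation $R_G^f$, and reuse the E1--E5 verification for $E^f$ unchanged.

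There is, however, one concrete gap. You assert that ``no extra closure condition \dots\ is required'' and then try to sidestep the verification that $R_i^f = \sim_i^f$ by an appeal to provable equivalence. That sidestep fails: $R_i^f$ is defined by agreement on the $D_{\{i\}}$-formulas \emph{that lie in $cl(\phi)$}, and if $K_i\chi \in cl(\phi)$ while $D_i\chi \notin cl(\phi)$ then the provable equivalence $K_i\chi \leftrightarrow D_i\chi$ is invisible at the level of $[s] = s \cap cl(\phi)$. In that situation $R_i^f$ can be strictly coarser than $\sim_i^f$ (indeed, if $cl(\phi)$ contains no $D_{\{i\}}$-formulas at all, $R_i^f$ is universal), so $M^f$ is not a pseudomodel. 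The paper fixes exactly this point by adding the closure condition that $D_i\psi \in cl(\phi)$ iff $K_i\psi \in cl(\phi)$; with this in place, $K_i\chi \in [s] \Leftrightarrow D_i\chi \in [s]$ holds for every $[s]$ and the required equality $R_i^f = \sim_i^f$ is immediate. You already flagged this step as the main obstacle; the resolution is simply to strengthen $cl(\phi)$ rather than to argue around it.
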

\begin{proof}
The proof is exactly as in Theorem~\ref{th:clc}, namely the
construction of $M^f$, but starting with a Canonical Playable Pseudomodel
rather than Canonical Playable Model; the definition of $M^c$
contains the clause
 \[\Gamma R_G \Delta \mbox{ iff } \forall H \subseteq G \{\psi : D_H\psi \in \Gamma\} = \{\psi: D_H\psi \in \Delta\}\]

We add the following condition to the closure: $D_i\phi \in cl(\phi)$ iff $K_i \psi \in cl(\phi)$.

We define $M^f$ to be a pseudomodel instead of a model, by adding the clause:
    \[[s] R_G^f [s'] iff \forall H \subseteq G \{\psi : D_H \psi \in [s]\} = \{\psi : D_H\psi \in [s']\}\]

We show that $M^f$ is indeed a pseudomodel:
    \begin{itemize}
    \item $R^f_i = \sim_i^f$: this follows from the fact that $K_i
      \phi \in [s]$ iff $D_i\phi \in [s]$ for any $i, \phi$ and $s$,
      which holds because of the $K_i \phi \rightarrow D_i\phi$ axiom
      and the new closure condition above.
   \item $G \subseteq H \Rightarrow R^f_H \subseteq R^f_G$: this holds
      by definition.
\end{itemize}

We add a case for $\theta = D_G\psi$ to the inductive
proof. This case is proven in exactly the same way as the $\theta
= K_i\psi$ case: the definitions of $\sim_i^f$ and $R_G^f$ are of
exactly the same form (in particular, $R_G^f$ is also an $S5$
modality). The proof that E1-E6 hold in the resulting pseudomodel is
the same as in the proof of Theorem~\ref{th:clc} for $E^f$.
\end{proof}

We are now going to transform the pseudomodel into a proper
model; it is a well-known technique for dealing with distributed
knowledge. In fact, we can make direct use of a corresponding existing
result for epistemic logic with distributed knowledge, and extend it
with the coalition operators/effectivity functions. We here give the
more general result for the language with also common knowledge, which
will be useful later.

\begin{theorem}[\cite{Fagin1995}]
  \label{th:yi}
  If $M = (S, \{\sim_i : i \in \agents\}, \{R_G : \emptyset \neq G
  \subseteq \agents\}, V)$ is an epistemic pseudomodel, then there is
  an epistemic model $M' = (S',\{\sim_i' : i \in \agents\}, V')$ and a
  surjective (onto) function $\themap: S' \rightarrow S$ such that
 for every $s' \in S'$ and formula $\phi \in \acro{ELCD}$, $M,\themap(s')
    \models \phi$ iff $M',s' \models \phi$.
\end{theorem}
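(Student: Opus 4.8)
The plan is to reduce this to the known "unravelling" construction for distributed knowledge in epistemic logic, which is exactly the content of Theorem 3.1.1 (or rather the relevant chapter) of \cite{Fagin1995}, and then verify that the construction carries the extra structure (the $C_G$ clauses and, later, the effectivity functions — though for this statement only the epistemic language $\acro{ELCD}$ is at stake). First I would recall the standard construction. Given the epistemic pseudomodel $M = (S,\{\sim_i\},\{R_G\},V)$, one builds $S'$ as a set of finite "paths" (or labelled trees): an element of $S'$ is a sequence $s' = (s_0, G_1, s_1, G_2, \ldots, G_k, s_k)$ where each $s_j \in S$, each $G_j$ is a nonempty coalition, and consecutive states are related by the appropriate relation, namely $s_{j-1} R_{G_j} s_j$. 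One sets $\themap(s') = s_k$, the last state. The relations $\sim_i'$ on $S'$ are then defined so that the distributed knowledge of a coalition $G$ gets "witnessed": $s' \sim_i' t'$ iff $t'$ extends $s'$ (or vice versa) by a single step $(G_{k+1}, s_{k+1})$ with $i \in G_{k+1}$ and, crucially, $s_k R_{G_{k+1}} s_{k+1}$ — together with a local condition that keeps the individual accessibility honest, e.g. also requiring $s_k \sim_i s_{k+1}$ so that $\sim_i'$ refines the pullback of $\sim_i$. One then takes the equivalence closure to make each $\sim_i'$ an equivalence relation, and sets $V'(p) = \{s' : \themap(s') \in V(p)\}$.

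The core of the argument is the truth-preservation claim, proved by induction on $\phi \in \acro{ELCD}$. The boolean and atomic cases are immediate from the definition of $V'$ and $\themap$. For $K_i$: one shows that $\themap$ maps the $\sim_i'$-class of $s'$ onto the $\sim_i$-class of $\themap(s')$ — surjectivity is by appending a one-step extension, and the inclusion the other way holds because every generator of $\sim_i'$ projects into $\sim_i$; then the semantic clause transfers directly. For $C_G$: since $C_G$ is interpreted by $(\bigcup_{i \in G}\sim_i)^*$ in both the pseudomodel and the model, and $\themap$ is a bounded morphism for each $\sim_i$ (zig and zag), it is a bounded morphism for the union and hence for its transitive closure, so the $C_G$ clause is preserved. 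The interesting case is $D_G$: in $M$ this is interpreted by $R_G$; in $M'$ it is interpreted by $\bigcap_{i \in G}\sim_i'$. One direction (if $M,\themap(s') \models D_G\psi$ then $M',s' \models D_G\psi$) follows because any $t'$ with $s' \mathrel{(\bigcap_{i\in G}\sim_i')} t'$ has $\themap(s') R_G \themap(t')$, which one checks by analysing the single-step generators: a step labelled $G_j$ witnesses $R_{G_j}$, and if it lies on an $\sim_i'$-path for every $i \in G$ then $G \subseteq G_j$, so $R_{G_j} \subseteq R_G$; combined with $R_G$ being an equivalence relation this gives $\themap(s') R_G \themap(t')$. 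The converse direction is the one that motivates the path construction: given $\themap(s') R_G t$, we must produce a $t' \in S'$ with $s' \mathrel{(\bigcap_{i\in G}\sim_i')} t'$ and $\themap(t') = t$ — and this is exactly the one-step extension $t' = (s', G, t)$, which by construction is $\sim_i'$-related to $s'$ for every $i \in G$. Hence $M,\themap(s')\models D_G\psi$ iff $M',s'\models D_G\psi$.

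Finally, surjectivity of $\themap$ is trivial since the length-zero paths $(s)$ already cover $S$, and one checks $M'$ genuinely is an epistemic model, i.e. that the $\sim_i'$ are equivalence relations — which holds by construction since we took equivalence closures, and one must double-check that taking the closure does not break the bounded-morphism properties used above (it does not, because the generating relation already projects into the equivalence relation $\sim_i$, and $\themap$ being a bounded morphism is preserved under transitive-reflexive-symmetric closure on the source). The main obstacle, and the only place requiring genuine care, is the $D_G$ case in the direction that needs a witness: one has to set up the path/relation definitions precisely enough that a single labelled step both creates the needed $\bigcap_{i\in G}\sim_i'$ link and does not accidentally create unwanted links that would falsify the other direction or the $\sim_i$ bounded-morphism property. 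Since this is precisely the construction carried out in \cite{Fagin1995} for the epistemic fragment, and the addition of $C_G$ only requires observing that bounded morphisms commute with transitive closure, I would present the above as the verification that their result applies verbatim to $\acro{ELCD}$, citing \cite{Fagin1995} for the details of the path construction and spelling out only the $C_G$ and $D_G$ inductive steps.
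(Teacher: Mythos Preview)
Your proposal is correct and follows precisely the path-unravelling construction of \cite{Fagin1995} that the paper invokes; the paper itself gives no proof beyond a citation to \cite{Fagin1995} and a pointer to \cite{yi}, so your sketch is in fact considerably more detailed than what appears there. The one point worth tightening is the forward $D_G$ direction: your claim that a generator step labelled $G_j$ must satisfy $G \subseteq G_j$ relies on the fact that $S'$ is a \emph{tree}, so the path between $s'$ and $t'$ is unique and hence the same path must witness $\sim_i'$ for every $i \in G$ simultaneously --- you use this implicitly but it deserves to be stated.
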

\begin{proof}
  This result is directly obtained from the completeness proof for
  \acro{ELCD} sketched in \cite[p. 70]{Fagin1995}. For a more detailed proof
  (for a more general language), see \cite[Theorem 9]{yi}.
\end{proof}

\begin{theorem}
  \label{th:sat-d}
  If a formula is satisfied in a finite pseudomodel, then it is satisfied 
in a model.
\end{theorem}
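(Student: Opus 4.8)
The plan is to take a finite pseudomodel $M = (S, \{\sim_i : i \in \agents\}, \{R_G : \emptyset \neq G \subseteq \agents\}, E, V)$ satisfying a formula $\phi$, and produce from it a genuine epistemic coalition model satisfying $\phi$. The key observation is that $M$ restricted to its epistemic part, $M_{ep} = (S, \{\sim_i : i \in \agents\}, \{R_G : \emptyset \neq G \subseteq \agents\}, V)$, is an epistemic pseudomodel, so Theorem~\ref{th:yi} applies: there is an epistemic model $M_{ep}' = (S', \{\sim_i' : i \in \agents\}, V')$ and a surjection $\themap : S' \to S$ with $M_{ep}, \themap(s') \models \chi$ iff $M_{ep}', s' \models \chi$ for all $\chi$ in the relevant epistemic language. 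The idea is then to lift $\themap$ to carry the effectivity function of $M$ back along $\themap$, defining $E'(G)(s') = \{X \subseteq S' : \themap(X) \in E(G)(\themap(s'))\}$ — or more precisely $X \in E'(G)(s')$ iff there exists $Y \in E(G)(\themap(s'))$ with $\themap^{-1}(Y) \subseteq X$, which is the monotone closure of the pullback and makes outcome monotonicity automatic.

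The steps, in order, are: (1) apply Theorem~\ref{th:yi} to the epistemic reduct to obtain $M_{ep}'$ and $\themap$; (2) define $E'$ on $S'$ by pulling back $E$ along $\themap$ as above, and let $M' = (S', \{\sim_i'\}, E', V')$; (3) verify that $E'$ is truly playable, i.e.\ satisfies E1--E6 — here liveness E1 uses that $\themap$ is surjective (so $\themap^{-1}(Y) = \emptyset$ forces $Y = \emptyset$, contradicting E1 for $E$), safety E2 and monotonicity E4 are immediate from the construction, superadditivity E5 follows because $\themap^{-1}(Y_1) \cap \themap^{-1}(Y_2) = \themap^{-1}(Y_1 \cap Y_2)$, $N$-maximality E3 follows from E3 for $E$ together with $\themap^{-1}(\bar Y) = \overline{\themap^{-1}(Y)}$, and E6 comes for free once $S'$ is finite (which it is, since $\themap$ has finite range and the construction in Theorem~\ref{th:yi}, resp.\ its source in \cite{yi}, keeps $S'$ finite — or one simply invokes that E1--E5 entail E6 on finite domains as noted in the Background section); (4) prove a truth lemma: for every subformula $\chi$ of $\phi$, $M', s' \models \chi$ iff $M, \themap(s') \models \chi$, by induction on $\chi$, where the Boolean and epistemic cases are handled by Theorem~\ref{th:yi} and the only genuinely new case is $\chi = [G]\psi$, which unwinds as $M', s' \models [G]\psi$ iff $\psi^{M'} \in E'(G)(s')$ iff (by IH, $\psi^{M'} = \themap^{-1}(\psi^M)$) $\themap^{-1}(\psi^M) \in E'(G)(s')$ iff (by definition of $E'$ and monotonicity/the fact that $\themap^{-1}(\psi^M)$ is itself a pullback set) $\psi^M \in E(G)(\themap(s'))$ iff $M, \themap(s') \models [G]\psi$; (5) conclude: since $M, s \models \phi$ for some $s \in S$ and $\themap$ is onto, pick $s' \in \themap^{-1}(s)$, and then $M', s' \models \phi$.

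The main obstacle is step (4), and within it the $[G]\psi$ case, specifically the equivalence ``$\themap^{-1}(\psi^M) \in E'(G)(s')$ iff $\psi^M \in E(G)(\themap(s'))$''. The right-to-left direction is immediate from the definition of $E'$. For left-to-right one needs that if $\themap^{-1}(\psi^M)$ contains $\themap^{-1}(Y)$ for some $Y \in E(G)(\themap(s'))$, then already $\psi^M \in E(G)(\themap(s'))$: this requires $Y \subseteq \psi^M$ — which follows because $\themap^{-1}(Y) \subseteq \themap^{-1}(\psi^M)$ and $\themap$ is surjective, so $\themap(\themap^{-1}(Y)) = Y$ and likewise for $\psi^M$ — and then outcome monotonicity E4 for $E$ gives $\psi^M \in E(G)(\themap(s'))$. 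So surjectivity of $\themap$ is doing real work here and must be used carefully. A secondary delicate point is making sure the inductive hypothesis genuinely yields $\psi^{M'} = \themap^{-1}(\psi^M)$ as sets, which is just the pointwise restatement of the IH but should be stated explicitly. Everything else is routine bookkeeping, and the $D_G$/$C_G$ cases of the truth lemma are subsumed by Theorem~\ref{th:yi} since those operators are interpreted purely in the epistemic part.
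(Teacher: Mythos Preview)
Your overall architecture is the same as the paper's, but there is a genuine gap in step~(3): with your \emph{uniform} definition of $E'$ (the same clause for every coalition, including $N$), $N$-maximality E3 can fail. The identity $\themap^{-1}(\bar Y)=\overline{\themap^{-1}(Y)}$ is not enough, because an arbitrary $X\subseteq S'$ need not be $\themap$-saturated (i.e.\ a union of fibres), and E3 must hold for \emph{all} $X$. Concretely: take $S=\{a,b\}$, $S'=\{a_1,a_2,b_1\}$ with $\themap(a_1)=\themap(a_2)=a$, $\themap(b_1)=b$, and let $E(\emptyset)(a)=\{\{a\},\{a,b\}\}$, $E(N)(a)=\{\{a\},\{a,b\}\}$. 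For $X=\{a_1\}$ one checks $\overline X=\{a_2,b_1\}\notin E'(\emptyset)(a_1)$ (neither $\themap^{-1}(\{a\})=\{a_1,a_2\}$ nor $\themap^{-1}(\{a,b\})=S'$ is contained in $\{a_2,b_1\}$), yet $X\notin E'(N)(a_1)$ (neither preimage is contained in $\{a_1\}$). So E3 fails. The paper avoids exactly this by defining $E'(N)$ separately, putting $Y\in E'(N)(u)$ iff $\bar Y\notin E'(\emptyset)(u)$; E3 then holds by fiat, and the truth lemma for $[N]\psi$ is recovered by reducing to the $\emptyset$-case via the equivalence $\psi^M\in E(N)(s)\Leftrightarrow \overline{\psi^M}\notin E(\emptyset)(s)$ (one direction E3, the other E1$+$E5).

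A secondary issue is your treatment of E6. The construction behind Theorem~\ref{th:yi} (unravelling a pseudomodel so that $D_G$ is interpreted by the actual intersection $\bigcap_{i\in G}\sim_i'$) typically yields an \emph{infinite} $S'$ even when $S$ is finite; surjectivity of $\themap$ onto a finite set says nothing about the size of its domain. So you cannot get E6 for free from E1--E5 via finiteness. The paper instead proves E6 directly: if $X\in E^{nc}(\emptyset)(s)$ then $\img(X)\in E'^{\,nc}(\emptyset)(u)$ for every $u$ with $\themap(u)=s$, using that $\themap^{-1}(Z)\subset\themap^{-1}(X)$ implies $Z\subset X$. Your truth-lemma argument for $[G]\psi$ with $G\neq N$ is correct and matches the paper; it is only the grand-coalition case and E6 that need repair.
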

\begin{proof}
  Let $M = (S, \{\sim_i : i \in \agents\}, \{R_G : \emptyset \neq G
  \subseteq \agents\}, E, V)$ be a finite pseudomodel such that $M,s \models
  \phi$.  Let $M_p = (S, \{\sim_i : i \in \agents\}, \{R_G : \emptyset
  \neq G \subseteq \agents\}, V)$ be the epistemic pseudomodel
  underlying $M$, and let $M'_p = (S',\{\sim_i' : i \in \agents\},
  V')$ and $\themap: S' \rightarrow S$ be as in Theorem \ref{th:yi}.  Let
  $\img(X) = \{s' : \themap(s') \in X\}$ for any set $X \subseteq
  S$. Finally, let $M' = (S',\{\sim_i' : i \in \agents\}, E', V')$
  where $E'$ is defined as follows:
  \begin{itemize}
  \item For $G \not = N$:
$$Y \in E'(G)(u) \ \Leftrightarrow\ \exists X \subseteq S, \ (Y \supseteq \img(X) {\rm and}\ X \in E(G)(\themap(u)))$$
\item for $G = N$:
$$Y \in E'(G)(u) \ \Leftrightarrow\ \bar{X} \not \in E'(\emptyset)(u)$$
  \end{itemize}

  Two things must be shown: that $M'$ is a proper model, and that it satisfies $\phi$.

  Since $M'_p$ is an epistemic model, to show that $M'$ is a model all
  that remains to be shown is that $E'$ is truly playable. We now show
  that that follows from true playability of $E$.
  \begin{description}
\item[E1] Note that $\img(X)=\emptyset$ iff $X = \emptyset$. 

For $G \not = N$,
$\emptyset \in E'(G)(u)$ iff (by definition of $E'$)
$\exists X \subseteq S, s \in S \ (\emptyset \supseteq \img(X) {\rm and}\ X \in E(G)(\themap(u)))$ iff 
$\emptyset \in E(G)(\themap(u)))$ which is impossible since $M$ satisfies E1.
Note that in particular this proves $\emptyset \not \in E'(\emptyset)(u)$. 

For $N$, $\emptyset \in E'(N)(u)$ iff $S' \not \in E'(\emptyset)(u)$ and we'll see that this is impossible below.

\item[E2] Note that $\img(S)=S'$.

For $G \not = N$, 
$S' \in E'(G)(u)$ iff (by definition of $E'$) $\exists X \subseteq S, \ (S' \supseteq \img(X) {\rm and}\ X \in E(G)(\themap(u)))$ 
and since $S' \supseteq \img(S)$ and $S \in E(G)(\themap(u))$, $S' \in E'(G)(u)$ holds. 
Note that in particular this proves $S' \in E'(\emptyset)(u)$.

For $N$, $S' \in E'(N)(u)$ iff $\emptyset \not \in E'(\emptyset)(u)$ and this was proved above.

\item[E3] $\forall u \in S' \forall Y \subseteq S' \ \bar{Y} \not \in E'(\emptyset)(u) \Rightarrow Y \in E'(N)(u)$ follows immediately
from the definition for $E'(N)$.

\item[E4] $E'$ is monotonic by definition for $G \not = N$. 

For $N$, assume $X \subseteq Y$ and $X \in E'(N)(u)$. Then $\bar{X} \not \in E'(\emptyset)(u)$. Since for $\emptyset$ we already have
monotonicity and $\bar{Y} \subseteq \bar{X}$, $\bar{Y} \not \in E'(\emptyset)(u)$. So $Y \in E'(N)(u)$.

\item[E5] $\forall u \in S' \forall G_1,G_2 \subseteq N \forall X',Y' \subseteq S'\ X' \in E'(G_1)(u)$ and $Y' \in E'(G_2)(u)$
$\Rightarrow X' \cap Y' \in E'(G_1 \cup G_2)(u)$, where $G_1 \cap G_2 = \emptyset$

For $G_1,G_2 \not = N$:
 
Let $s = \themap(u)$, $G_1  \cap G_2 = \emptyset$, $X' \in E'(G_1)(u)$, $Y' \in E'(G_2)(u)$. This means that for some $X$, $Y$,
such that $X' \supseteq X$ and $Y' \supseteq Y$, 
$X  \in E (G_1)(s)$ and $Y  \in E (G_2)(s)$, so since $M$ satisfies E5, $X  \cap Y  \in E(G_1 \cup G_2)(s)$. Since
$X' \supseteq X$ and $Y' \supseteq Y$, $X' \cap Y' \supseteq X \cap Y$, so $ X' \cap Y' \in E'(G_1 \cup G_2)(u)$.

For $G_1=N$ ($G_2$ has to be $\emptyset$): $X' \in E'(N)(u)$, $Y' \in E'(\emptyset)(u)$, prove $X' \cap Y' \in E'(N)(u)$.
We have $\bar{X'} \not \in E'(\emptyset)(u)$, $Y' \in E'(\emptyset)(u)$.  Assume by contradiction $X' \cap Y' \not \in E'(N)(u)$,
then $\bar{X' \cap Y'} \in E'(\emptyset)(u)$ which means $\bar{X'} \cup \bar{Y'} \in E'(\emptyset)(u)$. This together with
$Y' \in E'(\emptyset)(u)$ and E5 for $G_1,G_2=\emptyset$ gives $(\bar{X'} \cup \bar{Y'})\cap Y' \in E'(\emptyset)(u)$, that is
$\bar{X'} \in E'(\emptyset)(u)$. The latter contradicts $\bar{X'} \not \in E'(\emptyset)(u)$.

\item[E6] Suppose $X \in E^{nc}(\emptyset)(s)$. We claim that for
  every $u$ such that $s = \themap(u)$, $\img(X) \in
  E^{'nc}(\emptyset)(u)$.

  Assume by contradiction that there exists a $Y \subset \img(X)$ such
  that $Y \in E^{'nc}(\emptyset)(u)$ for some $u$ s.t. $s =
  \themap(u)$. By the definition of $E'$, this means that there is a
  $Z$ such that $\img(Z) \subseteq Y$ and $Z \in
  E(\emptyset)(s)$. Since $Y \subset \img(X)$ and $\img(Z) \subseteq
  Y$ it follows that $\img(Z) \subset \img(X)$. But it is a property
  of the $\img$ function that that implies that $Z \subset Y$, and
  this contradicts the assumption that $Z \in E(\emptyset)(s)$ and $X
  \in E^{nc}(\emptyset)(s)$.

\end{description}

In order to show that $M'$ satisfies $\phi$, we show that $M,\themap(u)
\models \gamma$ iff $M',u \models \gamma$ for $u \in S'$ and any
$\gamma$, by induction in $\gamma$. All cases except $\gamma =
[G]\phi$ are exactly as in the proof of Theorem \ref{th:yi} (see
\cite[Theorem 9]{yi} for a detailed inductive proof).

For the case that $\gamma = [G]\phi$, the inductive hypothesis is that
for all $\psi$ with $|\psi| < |[G] \phi|$, and any $t,v$ with $t =
\themap(v)$, $M,t \models \psi$ iff $M',v \models \psi$.  Given that
for every $v$ there is a unique $t$ such that $t = \themap(v)$, we can
state this as $\{v: M', v \models \psi\} = \img(\psi^M)$, or
$\psi^{M'} = \img(\psi^M)$.

First consider $G \not = N$.  $M,s \models [G] \phi$ iff $\phi^M \in
E(G)(s)$. Consider $\img(\phi^M)$. By the inductive hypothesis,
$\phi^{M'} = \img(\phi^M)$.  $\phi^M \in E(G)(s)$ holds iff $\phi^{M'}
\in E'(G)(u)$ iff $M',u \models [G] \phi$.

$M,s \models [N] \phi$ iff $\phi^M \in E(N)(s)$ iff (*) $\neg \phi^M
\not \in E(\emptyset)(s)$ iff (as above) $\neg \phi^{M'} \not \in
E'(\emptyset)(u)$ iff $M',u \models [N] \phi$.

Explanation of (*): one direction E3, the other direction E5 and E1.
\end{proof}

\begin{corollary}
  For any $CLD$-formula $\phi$,  $\vdash_{CLD} \phi$ iff $\models \phi$.
\end{corollary}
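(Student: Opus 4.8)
The plan is to chain together the results already established in this section; no new construction is required. The left-to-right direction, $\vdash_{CLD}\phi \Rightarrow \models\phi$, is precisely the Soundness Lemma stated above, so nothing remains to be done there. For the converse I would argue by contraposition: assuming $\not\vdash_{CLD}\phi$, I show that $\phi$ is refuted in some model.

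First, from $\not\vdash_{CLD}\phi$ it follows at once that $\neg\phi$ is $CLD$-consistent (this is just the definition of consistency). Next I apply Lemma~\ref{lemma:cld} (Finite Pseudomodel) to the formula $\neg\phi$: this yields a finite pseudomodel $M$ in which $E$ satisfies all of E1--E6, together with a state $s$ such that $M,s \models \neg\phi$. Finally I invoke Theorem~\ref{th:sat-d}: since $\neg\phi$ is satisfied in a finite pseudomodel whose effectivity function is truly playable, it is satisfied in a genuine epistemic coalition model $M'$, say $M',u \models \neg\phi$. Hence $\not\models\phi$, which is exactly what the contrapositive required, and combining the two directions gives the stated equivalence.

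I do not expect any obstacle at this stage: the substantive work has been front-loaded into Lemma~\ref{lemma:cld} --- the filtration of the canonical playable pseudomodel through a suitably closed finite set, exactly mirroring the construction of $M^f$ in the proof of Theorem~\ref{th:clc}, together with the extra closure clause relating $D_i\psi$ and $K_i\psi$ and the verification that the result is a pseudomodel satisfying E1--E6 --- and into Theorem~\ref{th:sat-d}, which unravels the pseudomodel into a proper model via Theorem~\ref{th:yi} and transfers true playability of $E$ along the unravelling. The only points that need care are bookkeeping ones: one must feed $\neg\phi$ rather than $\phi$ into Lemma~\ref{lemma:cld}, and one must check that the hypothesis of Theorem~\ref{th:sat-d} (a finite pseudomodel with truly playable $E$) is precisely the conclusion of Lemma~\ref{lemma:cld} --- which it is. The corollary then follows immediately.
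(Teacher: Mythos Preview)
Your proposal is correct and is exactly the intended argument: the corollary is simply the chaining of the Soundness Lemma, Lemma~\ref{lemma:cld}, and Theorem~\ref{th:sat-d} via contraposition, and the paper offers no proof beyond placing the corollary immediately after those results. Your only redundancy is stressing that the finite pseudomodel has truly playable $E$; by the paper's definition a pseudomodel already has a truly playable effectivity function, so the hypothesis of Theorem~\ref{th:sat-d} is met automatically.
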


\section{Completeness of Coalition Logic with both Common and Distributed Knowledge }
\label{sec:clcd}

In this section we consider the logic \acro{CLCD}, extending coalition
logic with operators for individual knowledge, common knowledge and
distributed knowledge.

The axiomatisation $CLCD$ is shown in Figure \ref{fig:clcd-ax}. It
extends $CL$ with standard axioms and rules for individual, common and
distributed knowledge.

\begin{figure}[h]
  \centering
\begin{description}
\item[Prop] Classical propositional logic
\item[G1] $\neg [G] \bot$
\item[G2] $[G] \top$
\item[G3] $\neg [\emptyset] \neg \phi \rightarrow [N] \phi$
\item[G4] $[G] (\phi \wedge \psi) \rightarrow [G] \psi$
\item[G5] $[G_1]\phi \wedge [G_2]\psi \rightarrow [G_1 \cup G_2] (\phi \wedge \psi)$, if $G_1 \cap G_2 = \emptyset$
\item[MP] $\vdash_{CLCD} \phi, \phi \rightarrow \psi \Rightarrow \vdash_{CLCD}  \psi$
\item[RG] $\vdash_{CLCD} \phi \leftrightarrow \psi \Rightarrow \vdash_{CLCD} [G] \phi \leftrightarrow [G] \psi $
\item[K] $K_i (\phi \rightarrow \psi) \rightarrow (K_i \phi \rightarrow K_i \psi)$
\item[T] $K_i \phi \rightarrow \phi$
\item[4] $K_i \phi \rightarrow K_i K_i \phi$
\item[5] $\neg K_i \phi \rightarrow K_i \neg K_i \phi$
\item[RN] $\vdash_{CLCD} \phi \Rightarrow \vdash_{CLCD} K_i \phi$
\item[C1] $E_G \phi \leftrightarrow  \bigwedge_{i \in G} K_i \phi$
\item[C2] $C_G  \phi \rightarrow E_G(\phi \wedge C_G \phi)$
\item[RN] $\vdash_{CLCD} \phi \Rightarrow \vdash_{CLCD} K_i \phi$
\item[RC] $\vdash_{CLCD} \phi \rightarrow E_G(\phi \wedge \psi) \Rightarrow \vdash_{CLCD} \phi \rightarrow C_G \psi$
\item[DK] $D_G (\phi \rightarrow \psi) \rightarrow (D_G \phi \rightarrow D_G \psi)$
\item[DT] $D_G \phi \rightarrow \phi$
\item[D4] $D_G \phi \rightarrow D_G D_G \phi$
\item[D5] $\neg D_G \phi \rightarrow D_G \neg D_G \phi$
\item[D1] $K_i \phi \leftrightarrow D_i \phi$
\item[D2] $D_G \phi \rightarrow D_H\phi$, if $G \subseteq H$
\end{description}
  \caption{$CLCD$: axiomatisation of \acro{CLCD}.}
  \label{fig:clcd-ax}
\end{figure}

As usual, soundness can easily be shown.
\begin{lemma}[Soundness]
  For any $CLCD$-formula $\phi$, $\vdash_{CLCD} \phi \Rightarrow  \models \phi$.
\end{lemma}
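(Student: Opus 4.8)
The plan is the standard one: argue by induction on the length of a $CLCD$-derivation of $\phi$, showing that every axiom is valid and every inference rule preserves validity. Fix an arbitrary epistemic coalition model $M = \tuple{S,E,\sim_1,\ldots,\sim_n,V}$ and state $s$; recall that each $\sim_i$ is an equivalence relation, that $E(s)$ is truly playable (E1--E5), that $C_G$ is interpreted by the transitive closure $(\bigcup_{i\in G}\sim_i)^*$, and that $D_G$ is interpreted by $\bigcap_{i\in G}\sim_i$, which is again an equivalence relation.

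The propositional fragment (Prop, MP) is immediate, and the purely epistemic axioms and rules for individual knowledge (K, T, 4, 5, RN) are just the well-known validities of S5 over equivalence relations. For the coalition operators, G1--G5 and RG are exactly Pauly's axioms, and their validity follows directly from E1 (Liveness) for G1, E2 (Safety) for G2, E3 ($N$-maximality) for G3, E4 (outcome monotonicity) for G4, and E5 (superadditivity) for G5; RG is immediate since $\phi\leftrightarrow\psi$ valid gives $\phi^M = \psi^M$. For the distributed-knowledge axioms, DK, DT, D4, D5 are again the S5 schemata, now over the equivalence relation $\bigcap_{i\in G}\sim_i$; D1 holds because $\bigcap_{i\in\{i\}}\sim_i = \sim_i$, so $D_i$ and $K_i$ receive the same semantics; and D2 holds because $G\subseteq H$ implies $\bigcap_{i\in H}\sim_i \subseteq \bigcap_{i\in G}\sim_i$, so every $D_H$-successor is a $D_G$-successor. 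Axiom C1 is valid by definition, since $E_G\phi$ is merely shorthand for $\bigwedge_{i\in G}K_i\phi$.

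This leaves the common-knowledge principles C2 and RC, which are the only slightly less mechanical cases. For C2, $C_G\phi \rightarrow E_G(\phi\wedge C_G\phi)$: if $M,s\models C_G\phi$ then every state reachable from $s$ by a nonempty $\bigcup_{i\in G}\sim_i$-path satisfies $\phi$; in particular, for any $i\in G$ and any $t$ with $s\sim_i t$, the state $t$ satisfies $\phi$, and moreover every state reachable from $t$ is also reachable from $s$, so $t\models C_G\phi$ as well, giving $s\models K_i(\phi\wedge C_G\phi)$ for each $i\in G$. For the rule RC, suppose $\models \phi\rightarrow E_G(\phi\wedge\psi)$ and $M,s\models\phi$; we show $M,s\models C_G\psi$ by proving, by induction on $k$, that every state reachable from $s$ by a $\bigcup_{i\in G}\sim_i$-path of length $k$ satisfies $\phi\wedge\psi$ — the hypothesis propagates $\phi$ along each single step and simultaneously yields $\psi$ at the new endpoint. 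Since every element of $(\bigcup_{i\in G}\sim_i)^*$-successors of $s$ is reached by some finite path, $M,s\models C_G\psi$ follows. I do not expect a genuine obstacle here; the mildest care is needed only in the path induction for RC, and, as in the other completeness proofs of this paper, the coalition cases are handled exactly as in \cite{pauly:2002a}.
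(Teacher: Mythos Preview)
Your proof is correct and is exactly the standard soundness argument the paper alludes to with ``as usual, soundness can easily be shown'' (the paper itself gives no proof). One tiny terminological slip: truly playable means E1--E6, not E1--E5, but since none of the axioms require E6 this is immaterial to the argument.
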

In the remainder of this section we show that $CLCD$ also is complete.

\begin{theorem}
  Any $CLCD$-consistent formula is satisfied in a finite pseudomodel.
\end{theorem}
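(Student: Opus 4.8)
The plan is to combine the filtration construction from the proof of Theorem~\ref{th:clc} with the treatment of distributed knowledge from Lemma~\ref{lemma:cld}. First I would define a \emph{canonical playable pseudomodel} $M^c = (S^c, \{\sim_i^c : i \in \agents\}, \{R_G^c : \emptyset \neq G \subseteq \agents\}, E^c, V^c)$ for \acro{CLCD} exactly as the Canonical Playable Pseudomodel for \acro{CLD}: $S^c$ is the set of maximal $CLCD$-consistent sets, $s \sim_i^c t$ iff $K_i s = K_i t$, $s R_G^c t$ iff $D_H s = D_H t$ for all $H \subseteq G$, $V^c$ is standard, and $E^c$ is given by the same two-case clause as before. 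As in the \acro{CLD} case, $\sim_i^c$ and $R_G^c$ are equivalence relations, $R_i^c = \sim_i^c$ (by D1), $G \subseteq H$ implies $R_H^c \subseteq R_G^c$ (by D2), and E1--E5 hold for $E^c$; moreover a pseudo truth lemma holds for all operators except $C_G$, precisely as in \acro{CLC}.

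Next, given a $CLCD$-consistent $\phi$, I would let $cl(\phi)$ be the set of subformulas of $\phi$ closed under single negations and under both of the extra conditions used earlier: $C_G \psi \in cl(\phi) \Rightarrow K_i C_G \psi \in cl(\phi)$ for all $i \in G$, and $D_i \psi \in cl(\phi) \Leftrightarrow K_i \psi \in cl(\phi)$. This set is finite. I would then filtrate $M^c$ through $cl(\phi)$ exactly as in Theorem~\ref{th:clc}, obtaining $M^f = (S^f, \{\sim_i^f : i \in \agents\}, \{R_G^f : \emptyset \neq G \subseteq \agents\}, E^f, V^f)$ where $S^f = \{[s] : s \in S^c\}$ with $[s] = s \cap cl(\phi)$, the relations $\sim_i^f$ and $R_G^f$ are defined by agreement on the $K_i$- and $D_H$-formulas in $cl(\phi)$ respectively (the $R_G^f$ clause being the one from Lemma~\ref{lemma:cld}), $V^f$ is the projection of $V^c$, and $X \in E^f(G)([s])$ iff $\{s' : \phi_X \in s'\} \in E^c(G)(s)$ with $\phi_X = \bigvee_{[t] \in X}\phi_{[t]}$. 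That $M^f$ is a genuine pseudomodel follows as in Lemma~\ref{lemma:cld}: $R_i^f = \sim_i^f$ because $K_i\psi \in [s]$ iff $D_i\psi \in [s]$ (from D1 and the new closure condition), and $G \subseteq H \Rightarrow R_H^f \subseteq R_G^f$ by definition.

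The truth lemma $M^f, [s] \models \theta \Leftrightarrow \theta \in [s]$ for $\theta \in cl(\phi)$ is then proved by induction on $\theta$, reusing the existing case analyses: the propositional and $K_i$ cases and the $[G]$ case (including the auxiliary claims (*) and (**)) are verbatim from Theorem~\ref{th:clc}; the $D_G$ case is handled exactly like the $K_i$ case, since $R_G^f$ is an $S5$ modality of the same syntactic shape (as observed in Lemma~\ref{lemma:cld}); and the $C_G$ case is the argument from Theorem~\ref{th:clc}, which refers only to the $\sim_i^f$ relations and the closure under $K_i C_G$, and is therefore unaffected by the presence of the $D_G$ operators. Finally, E1--E5 for $E^f$ are checked exactly as in the proposition following Theorem~\ref{th:clc}, and since $S^f$ is finite these entail E6, so $M^f$ is a finite pseudomodel satisfying $\phi$ at any $[s]$ with $\phi \in s$; such an $[s]$ exists since $\phi$ is consistent.

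The construction is essentially a juxtaposition of two previously established arguments, so I do not expect a serious obstacle; the only point requiring care is verifying that the two extra closure conditions on $cl(\phi)$ — one feeding the $C_G$ case, one feeding the fact that $R_G^f$ is $S5$ — do not interfere, and that the $C_G$ filtration argument (which uses the finite set $S_{G,\psi}$ and the derivations $\vdash_{CLCD} \phi_{[s]} \rightarrow \chi$, $\vdash_{CLCD} \chi \rightarrow \psi$, $\vdash_{CLCD} \chi \rightarrow E_G\chi$ with $\chi = \bigvee_{[t] \in S_{G,\psi}} \phi_{[t]}$) still goes through word for word when the underlying state set additionally carries the $R_G$ relations. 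Since those relations play no role in that derivation, it does.
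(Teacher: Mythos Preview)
Your proposal is correct and follows exactly the paper's own approach: the paper's proof simply states that it is identical to the proof of Lemma~\ref{lemma:cld} with the $C_G$ inductive clause from Theorem~\ref{th:clc} added. You have spelled out precisely this combination, including the joint closure conditions and the observation that the $C_G$ argument is unaffected by the extra $R_G$ structure.
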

\begin{proof}
  The proof is identical to the proof of Lemma \ref{lemma:cld}, with the
addition of the inductive clause $\theta= C_G \psi$ as in the proof of
Theorem \ref{th:clc}.
\end{proof}

We can now use the same approach as in the case of \acro{CLD}.

\begin{theorem}
  If a \acro{CLCD} formula is satisfied in a finite pseudomodel, it is
  satisfied in a model.
\end{theorem}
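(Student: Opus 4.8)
The plan is to mimic the proof of Theorem~\ref{th:sat-d} essentially verbatim, since the only difference between \acro{CLD} and \acro{CLCD} is the presence of the common knowledge operator $C_G$, and the tool we rely on --- Theorem~\ref{th:yi} --- is already stated for the full language \acro{ELCD} including common knowledge. Concretely, let $M=(S,\{\sim_i : i \in \agents\},\{R_G : \emptyset \neq G \subseteq \agents\},E,V)$ be a finite pseudomodel with $M,s\models\phi$. Strip off the effectivity function to obtain the epistemic pseudomodel $M_p$, apply Theorem~\ref{th:yi} to get an epistemic model $M'_p=(S',\{\sim_i' : i \in \agents\},V')$ together with a surjection $\themap : S' \rightarrow S$ preserving all \acro{ELCD} formulas, and then define $E'$ on $S'$ by exactly the same clause as in Theorem~\ref{th:sat-d}: for $G\neq N$, $Y\in E'(G)(u)$ iff there is $X\subseteq S$ with $Y\supseteq\img(X)$ and $X\in E(G)(\themap(u))$; and for $G=N$, $Y\in E'(N)(u)$ iff $\bar{Y}\notin E'(\emptyset)(u)$. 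Set $M'=(S',\{\sim_i' : i \in \agents\},E',V')$.

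As before, two things must be checked: that $M'$ is a genuine model, and that $M',u\models\gamma$ iff $M,\themap(u)\models\gamma$ for all $u\in S'$ and all \acro{CLCD} formulas $\gamma$. For the first, $M'_p$ is already an epistemic model, so it only remains to verify that $E'$ is truly playable (E1--E6), and this verification is word-for-word identical to the corresponding part of the proof of Theorem~\ref{th:sat-d}: it uses only properties of $\img$ (namely $\img(X)=\emptyset \Leftrightarrow X=\emptyset$, $\img(S)=S'$, monotonicity, and $\img(Z)\subset\img(X)\Rightarrow Z\subset X$) together with true playability of $E$, and none of this interacts with the epistemic structure. I would simply say that E1--E6 hold by the same argument as in Theorem~\ref{th:sat-d}.

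For the truth-preservation claim, I proceed by induction on $\gamma$. The propositional, individual-knowledge, distributed-knowledge \emph{and common-knowledge} cases are all handled by Theorem~\ref{th:yi}, since that theorem already gives $M,\themap(u)\models\psi$ iff $M',u\models\psi$ for every \acro{ELCD} formula $\psi$ --- in particular the clause $C_G\psi$ is included there, so there is nothing new to prove for it here. The only case requiring work is $\gamma=[G]\psi$, and this is again identical to the corresponding case in Theorem~\ref{th:sat-d}: by the inductive hypothesis $\psi^{M'}=\img(\psi^M)$ (using that $\themap$ is a function, so $\{v : M',v\models\psi\}=\img(\psi^M)$), and then for $G\neq N$ one reads off $\phi^M\in E(G)(\themap(u))$ iff $\phi^{M'}\in E'(G)(u)$ directly from the definition of $E'$, while for $G=N$ one uses the equivalence $\phi^M\in E(N)(s) \Leftrightarrow \overline{\phi^M}\notin E(\emptyset)(s)$ (one direction from E3, the other from E1 and E5) together with the already-established $\emptyset$-case.

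The main --- in fact only --- subtlety is to notice that no extra obstacle arises from $C_G$: it would be a mistake to try to re-prove a truth lemma for $C_G$ over the new model $M'$ from scratch, because $C_G$ is not a normal modality defined by a single accessibility relation and its treatment in the canonical/filtration construction was delicate. The point is that we already passed through the finite-pseudomodel stage (where $C_G$ was handled, via the clause imported from the proof of Theorem~\ref{th:clc}), and the transfer step of Theorem~\ref{th:yi} is stated for \acro{ELCD} precisely so that $C_G$ rides along for free. So the whole proof reduces to: ``apply Theorem~\ref{th:yi}, define $E'$ as in Theorem~\ref{th:sat-d}, and repeat the E1--E6 verification and the $[G]\psi$ induction step from that proof.''
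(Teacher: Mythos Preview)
Your proposal is correct and follows essentially the same approach as the paper: apply Theorem~\ref{th:yi} (which already covers $C_G$), define $E'$ exactly as in Theorem~\ref{th:sat-d}, reuse the E1--E6 verification verbatim, and add only the $[G]\psi$ inductive step. The paper's own proof is in fact a terser version of precisely this argument.
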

\begin{proof}
  The proof goes exactly like the proof of Theorem \ref{th:sat-d},
  using Theorem \ref{th:yi}.  The definition of the model $M'$ is
  identical to the definition in Theorem \ref{th:sat-d}, as is the proof 
that it is a
  proper model. For the last part of the proof, i.e., showing that
  $M'$ satisfies $\phi$, note that the last clause in Theorem
  \ref{th:yi} holds for epistemic logic with both distributed and
  common knowledge. Thus, the proof is completed by only adding the
  inductive clause for $[G]\phi$, which is done in exactly the same
  way as in Theorem \ref{th:sat-d}.
\end{proof}

\begin{corollary}
  For any $CLCD$-formula $\phi$,  $\vdash_{CLCD} \phi$ iff $\models \phi$.
\end{corollary}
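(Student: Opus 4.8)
The plan is to prove the two implications separately. The left-to-right direction, $\vdash_{CLCD}\phi \Rightarrow \models\phi$, is exactly the Soundness Lemma for $CLCD$ stated just above the corollary, so nothing new is needed there. For the right-to-left direction, $\models\phi \Rightarrow \vdash_{CLCD}\phi$, I would argue by contraposition: assume $\not\vdash_{CLCD}\phi$ and exhibit a genuine model falsifying $\phi$.

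So suppose $\phi$ is not a theorem of $CLCD$. Then $\neg\phi$ is $CLCD$-consistent, and since $\neg\phi$ is again a formula of the $CLCD$ language, the first theorem of this section (every $CLCD$-consistent formula is satisfied in a finite pseudomodel) yields a finite pseudomodel $M$ and a state $s$ with $M,s\models\neg\phi$. Applying the second theorem of this section (every $CLCD$-formula satisfied in a finite pseudomodel is satisfied in a model), we obtain a genuine epistemic coalition model $M'$ and a state $s'$ with $M',s'\models\neg\phi$, i.e.\ $M',s'\not\models\phi$. Hence $\not\models\phi$, which is the contrapositive of the desired implication. Combining the two directions gives $\vdash_{CLCD}\phi$ iff $\models\phi$.

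There is essentially no obstacle here: the corollary is just the packaging of the Soundness Lemma together with the two completeness-via-pseudomodels theorems of this section, and both of those have already been established (the first by reducing to Lemma~\ref{lemma:cld} plus the $C_G$ clause from Theorem~\ref{th:clc}, the second by the construction of Theorem~\ref{th:sat-d} together with Theorem~\ref{th:yi}). The only points worth checking are entirely routine: that $\neg\phi$ lies in the $CLCD$ language (immediate from the grammar), that consistency of $\neg\phi$ follows from non-derivability of $\phi$ by Lindenbaum/maximal-consistent-set reasoning, and that $M',s'\models\neg\phi$ is literally the negation of $\models\phi$ as defined in Section~2. If a direct rather than contrapositive phrasing is preferred, the same chain of implications shows that a valid formula whose negation were consistent would be falsified in some model, a contradiction; hence every valid $CLCD$-formula is derivable.
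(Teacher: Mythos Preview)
Your proof is correct and is exactly the intended argument: the corollary is immediate from the Soundness Lemma together with the two theorems of this section, via the standard contrapositive step you describe. The paper does not spell out a proof for this corollary, but your packaging is precisely what is meant.
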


\section{Conclusions}
\label{sec:conclusions}

This papers solves several hitherto open problems, namely proving
completeness of Coalition Logic extended with group knowledge
modalities. The axioms for the epistemic modalities are the same as in
the absence of the Coalition Logic axioms, however the completeness
proofs require non-trivial combinations of techniques. The next step
would be to look at complete axiomatisations of logics resulting from
imposing some conditions on the interaction of coalitional ability and
group knowledge (such as the examples in the Introduction), and
obtaining results on the complexity of satisfiability problem for
\acro{CLC}, \acro{CLD} and \acro{CLCD}.

\bibliographystyle{plain}
\bibliography{bib}

\end{document}